\documentclass[aps,pra,twocolumn,superscriptaddress,floatfix,
nofootinbib,showpacs,longbibliography]{revtex4-1}

%#####################################
\usepackage[utf8]{inputenc}  
\usepackage[T1]{fontenc}     %Output what you want e.g., é, ł, a, ü
\usepackage[british]{babel}  %Do hyphenation according to british english
\usepackage[sc,osf]{mathpazo}\linespread{1.05}  %Palatino font
\usepackage[scaled=0.86]{berasans}  % URL font that go well wtih palatino
\usepackage[colorlinks=true, citecolor=blue, urlcolor=blue]{hyperref}  %Hyperlinks (pink, green, blue)
\usepackage{graphicx} % Package to insert exteral figures
\usepackage[babel]{microtype}  %Improves text justification
\usepackage{amsmath,amssymb,amsthm,bm,amsfonts,mathrsfs,bbm} %Usefull math packages

\usepackage{xspace}  %Useful to add space in macros
\usepackage{pgf,tikz}
\usepackage{xcolor}
\usepackage{multirow}
\usepackage{array}
\usepackage{bigstrut}
\usepackage{braket}
\usepackage{color}
\usepackage{natbib}
\usepackage{multirow}
\usepackage{mathtools}
\usepackage{float}
\usepackage[caption = false]{subfig}
\usepackage{xcolor,colortbl}
\usepackage{color}
\newcommand{\Tr}{\operatorname{Tr}}

\newcommand{\be}{\begin{equation}}
\newcommand{\ee}{\end{equation}}
\newcommand{\ba}{\begin{eqnarray}}
\newcommand{\ea}{\end{eqnarray}}

\newtheorem{proposition}{Proposition}

\newtheorem{lemma}{Lemma}
%\newenvironment{proof}{{\bf\emph{Proof.} }}{\hfill $\Box$} 
%################################

\begin{document}

\title{Advantage of Quantum Theory Over Non-classical Models of Communication}

\author{Sutapa Saha}
%\email{xyz}    
\affiliation{Physics and Applied Mathematics Unit, Indian Statistical Institute, 203 B.T. Road, Kolkata 700108, India.}

\author{Some Sankar Bhattacharya}
%\email{xyz}    
\affiliation{Department of Computer Science, The University of Hong Kong, Pokfulam Road, Hong Kong.}

\author{Tamal Guha}
%\email{xyz}    
\affiliation{Physics and Applied Mathematics Unit, Indian Statistical Institute, 203 B.T. Road, Kolkata 700108, India.}

\author{Saronath Halder}
%\email{xyz}    
\affiliation{Quantum Information and Computation Group, Harish-Chandra Research Institute, HBNI, Chhatnag Road, Jhunsi, Prayagraj (Allahabad) 211 019, India
}

\author{Manik Banik}
%\email{xyz}    
\affiliation{School of Physics, IISER Thiruvananthapuram, Vithura, Kerala 695551, India.}

\begin{abstract}
	Quantum correlations provide dramatic advantage over the corresponding classical resources in several communication tasks. However a broad class of probabilistic theories exists that attributes greater success than quantum theory in many of these tasks by allowing supra-quantum correlations in `space-like' and/or `time-like' paradigms. In this letter we propose a communication task involving three spatially separated parties where one party (verifier) aims to verify whether the bit strings possessed by the other two parties (terminals) are equal or not. We call this task {\it authentication with limited communication}, the restrictions on communication being: (i) the terminals cannot communicate with each other, but (ii) each of them can communicate with the verifier through single use of channels with limited capacity. Manifestly, classical resources are not sufficient for perfect success of this task. Moreover, it is also not possible to perform this task with certainty in several non-classical theories although they might possess stronger `space-like' and/or `time-like' correlations. Surprisingly quantum resources can achieve the perfect winning strategy. The proposed task thus stands apart from all previously known communication tasks as it exhibits quantum advantage over other non-classical strategies.
\end{abstract}

%\pacs{03.65.Ta,03.65.Ud, 03.67.Dd}
%\keywords{}

% 03.65.Ta	Foundations of quantum mechanics;
% 03.67.Dd	Quantum cryptography and communication security
% 03.67.Hk	Quantum communication
% 03.65.Ud	Entanglement and quantum nonlocality

\maketitle
%\section{Introduction}

Advent of quantum information theory identifies useful applications of quantum mechanics over its classical counterpart in several computational as well as information theoretic protocols \cite{Shor84,Bennett84,Bennett92,Bennett93,Buhrman10,Pironio10,Colbeck12,Dale15}. 
%It admits extremely efficient algorithms, such as Shor's factoring algorithm \cite{Shor84}, advantageous communication tasks, viz. quantum teleportation \cite{Bennett93} and quantum super dense coding \cite{Bennett92}, qualitatively superior cryptographic protocols, eg. the BB-$84$ key distribution protocol \cite{Bennett84}, effective algorithms in a distributed computational task known as communication complexity \cite{Buhrman10}, and also shows provable advantage in randomness processing \cite{Pironio10,Colbeck12,Dale15}. Development of quantum error-correcting codes and fault-tolerant quantum computation \cite{Terhal15} make many of these noble concepts realizable in experiment \cite{Mattle96,Bouwmeester97,Vandersypen01,Gisin02}. 
However, in many cases, it is notoriously hard to find which particular feature(s) of quantum theory like coherent superposition, continuity of state space, non-classical correlations, viz. nonlocality/entanglement/quantum discord accounts for quantum advantage in a particular task. 

A more general mathematical modeling of an operational theory is possible under the framework of generalized probability theories (GPTs) which incorporates several non classical features of quantum theory and thus manifests many advantageous protocols \cite{Barrett05,Barnum07,Barnum08,Scarani12,Banik15(1)}. For example, in the distributed computing setting, where several spatially separated computing devices are allowed to exchange limited communications in order to perform some computational task, quantum \emph{nonlocal correlations} can provide surprising advantages \cite{Cleve97,Brukner04}. Interestingly, in such cases, one can come up with more dramatic correlations that satisfy the relativistic causality or more broadly no-signaling (NS) principle but at the same time exhibit advantage over the quantum correlations -- Popescu-Rohrlich (PR) correlation is one such celebrated example in the bipartite setting \cite{Popescu94}. Such stronger correlations exhibit weird phenomena as reflected in violation of several physical and information theoretic principles \cite{vanDam,Brassard06,Pawlowski09,Navascues09,Fritz13}.
On the other extreme, a different toy theory is also possible that contains only local correlations but can supersede quantum theory in certain communication task by allowing stronger `time-like' correlations. Such an anomalous behavior has been reported very recently by the name of \emph{hypersignaling} (HS) phenomena \cite{DallArno17}. 

Existence of such non-classical toy theories thus provoke an important question: what makes quantum theory special in operational sense? In other words, does there exist some task(s) where quantum theory outperforms these non-classical toy theories? Answer to this question is partially known from the perspective of computational power of a physical theory \cite{Aaronson04,Krumm18}. It has been shown that several beyond-quantum models of computation are trivial, i.e., the set of reversible transformations consists entirely of single-bit gates, and not even classical computation is possible \cite{Krumm18}. However it is known that the class of functions computable with classical physics exactly coincides with the class computable quantum mechanically, and the quantum exponential speed-up over classical computation for a range of problems, such as factoring, is based upon the strong believe about persistence of polynomial hierarchy \cite{Harrow17}.

In this letter we approach this question from a different outlook -- from the perspective of a communication task. Interestingly we find that there exists a communication task that can perfectly be won in quantum theory but the success probability of this task is limited not only in classical theory but also in HS model and PR model. Our task involves three spatially separated parties, where two parties are given random two-bit strings. The third party acts as a verifier who has to verify whether these strings are identical or not. The first two parties cannot communicate with each other but can encode their messages in the state of some physical system and consequently send it to the verifier. However single-shot information carrying capacity, namely the \emph{signaling dimension}, of these physical systems are limited to two. 
We call this task \emph{authentication with limited communication} (ALC). Naturally the question arises which feature of quantum theory makes it quintessential for perfectly winning the ALC task even though it allows limited correlations in space-like and time-like paradigms compared to other non-classical GPTs. At this point we note that though PR theory is more radical than quantum mechanics in allowing joint state space structure and hence stronger nonlocal correlations but it is conservative in comparison to the later one to allow measurement in entangled bases. The HS model is the other extreme: it allows more general kind of measurements than quantum theory but grants only local correlations \cite{DallArno17}. We then consider other two theories, namely Hybrid model and frozen model, lying in between PR theory and HS model. These two theories allow entangled kind of states as well as measurements in entangled bases. However we show that perfect success of ALC is not possible even in those models. This indicates that the perfect success of ALC in quantum theory depends on the more intricate structure of the theory. To apprehend this intricate nature we define the ALC task in a generic convex model of operational theories also known as GPT framework \cite{Mackey63,Ludwig1967,Ludwig1968,Mielnik1969,Hardy01,Barrett07,Hardy11,Chiribella11,Masanes11,Janotta14,Janotta13}.

%\section{Operational Model framework}
%The origin of the convex operational framework dates back to $1960$s with the  aim to investigate axiomatic derivations of the Hilbert space formalism of quantum theory from operational postulates \cite{Mackey63,Ludwig1967,Ludwig1968,Mielnik1969}. Recently the approach has gained renewed interest from researchers in quantum information theory exploring the information theoretic foundations of quantum theory \cite{Hardy01,Barrett07,Hardy11,Chiribella11,Masanes11,Janotta14}. 

%A GPT is specified by a list of system types, together with composition rules specifying which system type describes the combination of several other types.
In a GPT, each system is described by some state $\omega$ which specifies outcome probabilities for all measurements that can be performed on it. A complete representation of the state is achieved by listing the outcome probabilities for measurements belonging to ‘fiducial set’ \cite{Hardy01}. The set of possible states $\Omega$ of a given system type is a compact and convex set embedded in positive convex cone $V_+$ of some real vector space $V$. %Convexity of $\Omega$ assures that any statistical mixture of states is a valid state. The extremal points of the set $\Omega$ that do not allow any convex decomposition in terms of other states are called pure states or states of maximal knowledge.
An effect $\mathit{e}$ is a linear functional on $\Omega$ that maps each state onto a probability, i.e., $e:\Omega\mapsto[0,1]$, with  $\mathit{e}(\omega)$ bearing the interpretation of successfully filter the effect $e$ on the system state $\omega$. The set of effects $\Omega^*$ is embedded in the positive dual cone $(V^*)_+$ \cite{footnote}. The unit effect $u$ is defined as, $u(\omega)=1,~\forall~\omega\in\Omega$. A $d$-outcome measurement is specified by a collection of $d$ effects $M\equiv\{\mathit{e}_j~|~\sum_je_j=u\}$ such that $\sum_{j}\mathit{e}_j(\omega) = 1$ for all valid states $\omega$.
% A set of state $\{\omega_i\}_i$ is perfectly distinguishable in a single shot measurement if there exists some measurement $M=\{e_j\}_j$ such that $e_j(w_i)=\delta_{ij}$. 
A transformation $T$ maps states to states, i.e., $T:\Omega\mapsto\Omega$. Similarly as effects, they also have to be linear in order to preserve statistical mixtures. 
%Under a valid transformation the total probability cannot increase, but can decrease in general.

GPT framework also considers composite systems with local state spaces (say) $\Omega_1$ and $\Omega_2$. Such a composition must be constructed in accordance with NS principle that prohibits instantaneous communication between two spatially separated locations. NS along with another less intuitive assumption called tomographic locality \cite{Hardy13}, sufficiently implies that the state space of the composite system lives in the vector space $V_1\otimes V_2$\cite{Barrett07}. We denote the composite state space as $\Omega\equiv\Omega_1\otimes\Omega_2=(V_1\otimes V_2)^1_+$, where $(V_1\otimes V_2)^1_+$ denotes the normalized positive cone with normalization given by the order unit $u_1\otimes u_2\in V_1^*\otimes V^*_2$. There is no unique choice for the positive cone, but it lies within
the two extremes, $(V_1\otimes_{\mbox{min}}V_2)_+:= \{\sum \alpha_{ij}\omega^i_1\otimes\omega^j_2~|~\alpha_{ij}\in\mathbb{R}_{\ge 0}\}$ and $(V_1\otimes_{\mbox{max}}V_2)_+:=(V^*_1\otimes_{\mbox{min}}V^*_2)^*_+$.
While the local state spaces are simplexes, which is the case for classical probability theory with discrete event space, the choice of tensor product is unique \cite{Namioka1969}. The quantum mechanical tensor product is neither the minimal one nor the maximal one, lies strictly in between.

%A GPT system can be used as classical information carrier. In a generic communication protocol a sender aims to send some classical information $x$ appearing with probability $p(x)$ to a spatially separated receiver. 

In a communication protocol using such a GPT the sender encodes the classical message $x$ into some state $\omega_x$ and sends the encoded system to the receiver who decode the message by performing some measurement $M=\{e_y\in\mathcal{E}~|~\sum e_y=u\}$. Given a message $x\in X$ the probability of getting the outcome $y\in Y$ is $p(y|x):=e_y(\omega_x)$ and the mutual information $I(X:Y):=\sum_{xy}p_{xy}\log_2[p(xy)/p(x)p(y)]$ quantifies the amount of classical information transmitted through such a protocol. The Holevo capacity $\mathcal{H}(\Omega)$, for a system type with state space $\Omega$, is defined as the maximum of $I(X:Y)$, over all probability distributions $p(x)$, all encoding strategies, and all decoding measurements \cite{Holevo73}. 
%It may be possible to encode arbitrary large amount (even unbounded amount) of classical information in some GPT system. However 
%The Holevo capacity is always bounded by the \emph{signaling dimension} $\kappa(\Omega)$ defined in \cite{DallArno17}, i.e., $\mathcal{H}(\Omega)\le\log_2\kappa(\Omega)$. 
In the present letter we are interested in the single-shot capacity of a GPT channel given by signaling dimension. The operational definition of signaling dimension involves a communication scenario\cite{DallArno17}. As shown in \cite{Frenkel15}, signaling dimension of composite quantum system can not be greater than product of that of component subsystems. However, for every GPT this is not true in general and such a GPT manifests the hypersignaling phenomena \cite{DallArno17}.  

%\section{The ALC task}
The ALC task can be presented as a game involving three spatially separated players. Alice and Bob are two non-communicating players who are given random two-bit strings $x\in\{0,1\}^2$ and $y\in\{0,1\}^2$, respectively. Charlie is the verifier whose goal is to verify whether the strings given to Alice and Bob are identical or not. If there is no restriction on the amount of communications that Alice and Bob can convey to the verifier then there is no reason to not accomplishing the goal with perfect success. However the game has to be played under restricted communication scenario. Each of the players can encode their respective message in the state of some GPT system and subsequently send the system to the verifier through memoryless channels (see Fig.\ref{fig}). 
Access to channels with memory reduces the present task to the familiar dense coding protocol. Limitation to memoryless channels deems the ALC task to be weaker than dense coding.
Moreover the signaling dimension of the GPT system cannot be more than $2$. Though the players cannot communicate with each other, they are allowed to make their respective encoding systems correlated, i.e., they can use some composite state $\omega_{AB}\in\Omega_A\otimes\Omega_B$, where $\Omega_A$ and $\Omega_B$ denote Alice's and Bob's state spaces, respectively. While in classical theory, this implies that, the players can use only some classical correlation, in quantum theory they can use entangled states and in non-classical GPTs even more generic composite states can be used. For decoding, the verifier performs a two outcome measurement on the composite state space and depending on the measurement result he tries to authenticate whether $x=y$ or not.  
\begin{figure}[t!]
	\begin{center} 
		\includegraphics[scale=0.3]{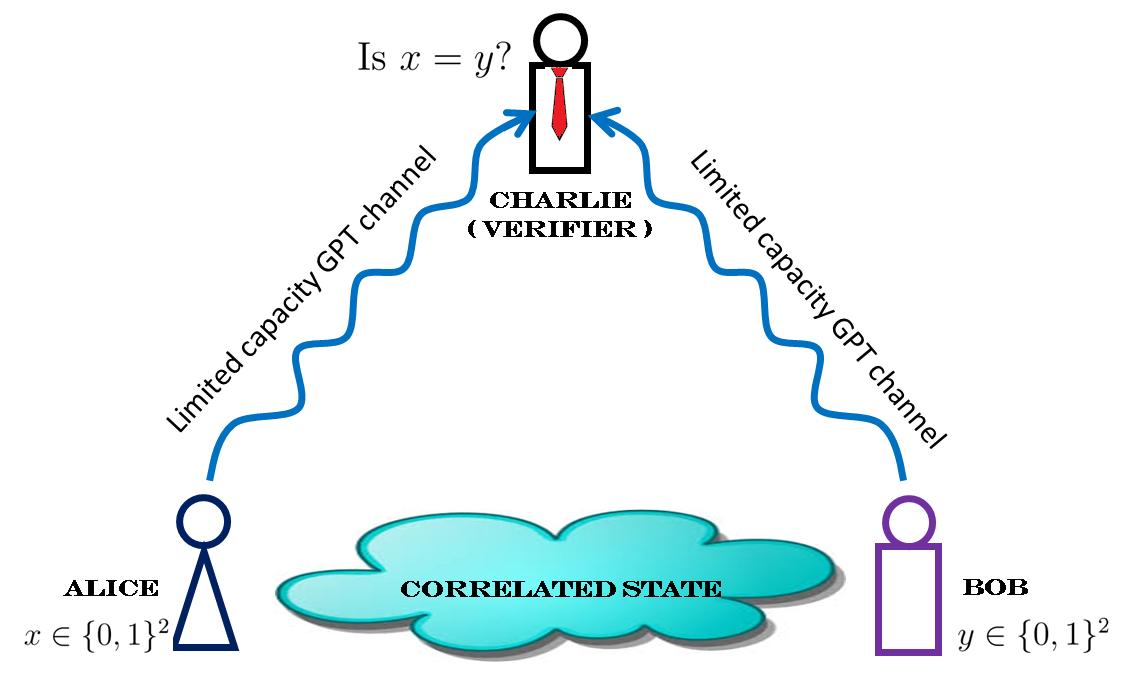}
	\end{center}
	\caption{(Color on-line) The ALC task. Alice and Bob encode their messages in some GPT state and send the encoded systems to the verifier. Each of the channels (memory less) from the players to the verifier has Holevo capacity $1$. Communication between Alice and Bob are not allowed but they can share some composite GPT state. Verifier (Charlie) performs some two outcome measurement on the composite systems received from Alice and Bob and accordingly tries to answer whether their strings are identical or not.}
	\label{fig} 
\end{figure}    

An asymptotic version of the ALC task has already been studied in quantum theory by the name quantum fingerprinting \cite{Buhrman2001}, which was originally introduced by Yao to address a particular model of communication complexity namely simultaneous message passing model \cite{Yao1979}. There Alice and Bob are given two random $n$-bit strings. Charlie has to answer whether their strings are equal or not while minimizing the amount of information that Alice and Bob send to him. We consider the simplest version of the task with $n=2$ with a prior limitation on the amount of communications. While the goal in \cite{Buhrman2001} was to establish an exponential quantum-classical gap for the equality problem in the simultaneous message passing model, here our aim is to establish quantum advantage in communication task not only over the classical theory but also over some other non-classical GPT models of communication. Subsequently, we study the ALC task in different theories. 

%\section{ALC in different GPTs}
{\it\textbf{ Classical theory:}} Classical theory arises as a special case of generalized probability theories (GPTs). State space of a classical system having signaling dimension $\kappa$ is a $(\kappa-1)$ simplex. The restriction on communication in the ALC task compels Alice and Bob to encode their messages in $1$-simplex which geometrically represents a line segment. To perform the ALC task in classical theory, the players can undergo the following naive protocol: both Alice and Bob send the first bit of their strings and the verifier answers $x=y$, if he obtains identical bits, otherwise answers $x\neq y$. The average success probability under this strategy is $3/4$. However the players can follow more general strategies-- pure, mixed or shared. A pure strategy can be defined as a tuple $(\mathrm{E}_A,\mathrm{E}_B,\mathrm{D})$, where $\mathrm{E}_A:\{0,1\}^2\mapsto\{0_A,1_A\}$ and $\mathrm{E}_B:\{0,1\}^2\mapsto\{0_B,1_B\}$ are some encoding strategies for Alice and Bob, respectively, and $\mathrm{D}:\{0_A,1_A\}\times\{0_B,1_B\}\mapsto\{0,1\}$ is some decoding strategy for the verifier. A mixed strategy is a tuple $(P_{\mathrm{E}_A},P_{\mathrm{E}_B},P_{\mathrm{D}})$, where $P_Z$ denotes distribution over $Z\in\{E_A,E_B,D\}$. They can use classical correlation to get a shared strategy $(\lambda_{\mathrm{E}_A\mathrm{E}_B},P_{\mathcal{D}})$, where $\lambda$ is shared randomness between Alice and Bob and in general $\lambda_{\mathrm{E}_A\mathrm{E}_B}\neq P_{\mathrm{E}_A}P_{\mathrm{E}_B}$. 

\begin{proposition}
	There exists no perfect classical strategy for the ALC task, neither pure, nor mixed, nor shared.
\end{proposition}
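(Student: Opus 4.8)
The plan is to first collapse the apparently broad space of classical strategies down to pure deterministic ones, and then kill all pure strategies with a one-line pigeonhole argument against the equality/inequality constraints.

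First I would invoke the reduction already noted in the text (details deferred to \cite{Appendix}): the average success probability is an affine functional of the randomization, since the pair $(x,y)$ is drawn from the fixed uniform distribution on $\{0,1\}^2\times\{0,1\}^2$ and, conditioned on the value of the shared randomness $\lambda$, the strategy is a pure tuple $(\mathrm{E}_A,\mathrm{E}_B,\mathrm{D})$ (with $\mathrm{D}$ not depending on $\lambda$, as $\lambda$ is shared by Alice and Bob only). Consequently the optimum is attained at an extreme point of the (finite-dimensional, polytopal) strategy set, i.e. at a pure strategy, and in particular a mixed or shared strategy achieves success probability $1$ only if \emph{every} pure strategy in its support does. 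Thus it suffices to show that no perfect pure strategy exists.

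Next, assume for contradiction that $(\mathrm{E}_A,\mathrm{E}_B,\mathrm{D})$ is perfect, i.e. $\mathrm{D}(\mathrm{E}_A(x),\mathrm{E}_B(y))=1$ exactly when $x=y$ and $=0$ otherwise, for all $x,y\in\{0,1\}^2$. Since $\mathrm{E}_A$ sends the four strings into the two-element alphabet $\{0_A,1_A\}$, the pigeonhole principle gives distinct $x_1\neq x_2$ with $\mathrm{E}_A(x_1)=\mathrm{E}_A(x_2)=:a$. Feeding the decoder the two input pairs $(x_1,x_1)$ and $(x_2,x_1)$, perfectness forces $\mathrm{D}(a,\mathrm{E}_B(x_1))=\mathrm{D}(\mathrm{E}_A(x_1),\mathrm{E}_B(x_1))=1$ and at the same time $\mathrm{D}(a,\mathrm{E}_B(x_1))=\mathrm{D}(\mathrm{E}_A(x_2),\mathrm{E}_B(x_1))=0$, which is absurd. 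Hence no perfect pure strategy exists, and by the first step neither does any perfect mixed or shared strategy, proving the proposition.

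The argument has essentially no computational obstacle; the only point requiring care is the reduction to pure strategies, which rests on the linearity of the figure of merit in the randomization and on the finiteness of the strategy set (the content borrowed from \cite{Appendix}). One may add, as a free byproduct, that the same collision argument shows at least one of the sixteen equiprobable pairs is misclassified by any pure strategy, so the classical success probability is at most $15/16$; pinning down the exact optimum $13/16$ quoted above needs a finite case analysis over all encodings but is not required here.
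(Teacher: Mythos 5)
Your proof is correct, and the second half takes a genuinely different route from the paper. You share with the paper the first step, namely the convexity reduction from shared/mixed to pure strategies (the paper also defers this to the appendix and, as you note, for \emph{perfect} success one only needs the even weaker observation that a convex combination of numbers in $[0,1]$ equals $1$ only if every term in the support does). Where you diverge is the treatment of pure strategies: the paper's appendix enumerates all pure encodings (classified by one-vs-three and two-vs-two partitionings of the four strings) together with all decodings, and establishes by exhaustive case analysis that the optimal classical success is exactly $13/16$, from which the impossibility of a perfect strategy follows a fortiori. Your pigeonhole/collision argument --- two distinct strings $x_1\neq x_2$ must collide under $\mathrm{E}_A$, so the decoder sees identical inputs on $(x_1,x_1)$ and $(x_2,x_1)$ yet must answer differently --- replaces that enumeration with a two-line contradiction. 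What you lose is the exact optimum $13/16$ (your argument only gives the bound $15/16$), which the paper needs anyway to quantify the classical--quantum gap; what you gain is a proof that is immune to bookkeeping errors and that generalizes verbatim to any alphabet sizes with fewer symbols than strings, i.e.\ to the $n$-bit version of the task with any sub-$2^n$-dimensional classical channel. Both are valid; yours is the cleaner proof of the stated proposition, the paper's is the stronger quantitative statement.
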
 

\begin{proof}
	As already mentioned Alice and Bob can follow encoding strategies that are pure, mixed or shared. However due to convexity it will be sufficient to consider only the pure encoding strategies for finding the optimal success in ALC task. Alice's and Bob's (Charlie's) encoding (decoding) refers to a partitioning of the strings $\{00,~01,~10,~11\}$ into two disjoint sets. Such a non-trivial partitioning can be of two types- (i) one-vs-three: we call these {\it full} encoding (decoding), (ii) two-vs-two: we call these {\it partial} encoding (decoding). As the name suggests {\it partial} strategies refer to the scenario where a part of the string is ignored (it could be one of the bits or their parity). There are $20$ {\it full} and $36$ {\it partial} encoding (decoding) strategies. It is important to note that the objective is to {\it authenticate/compare} the bit strings of Alice and Bob rather than {\it knowing} the bit strings separately. While a partial encoding and decoding proves to optimal when the task is to {\it know} the bit strings\cite{Elron07}, {\it comparing} them requires {\it full} encoding and decoding. Optimizing over all such {\it full} encoding-decoding strategies The optimal classical success turns out to be $13/16$. A representative strategy which achieves this optimal values is the following: Alice's (Bob's) encoding:-- $00\mapsto 0_{A(B)};~~ 01,10,11\mapsto 1_{A(B)}$; verifier's decoding:-- $x=y$, if obtained $0_{A}$ and $0_{B}$ from Alice and Bob, otherwise $x\neq y$.
\end{proof}

{\it\textbf{ Quantum theory:}} In this case both Alice and Bob can use state of a qubit to encode their messages, i.e., the encoding state space is the set of density operator $\mathcal{D}(\mathbb{C}^2)$ acting on the Hilbert space $\mathbb{C}^2$ which is isomorphic to unit sphere in $\mathbb{R}^3$. A general encoding strategy for Alice is a mapping, $\mathrm{E}^q_A:x \mapsto \rho^x_A\in\mathcal{D}(\mathbb{C}^2_A)$ and similarly for Bob, $\mathrm{E}^q_B:y \mapsto \rho^y_B\in\mathcal{D}(\mathbb{C}^2_B)$. Verifier performs a two outcome positive operator valued measure (POVM) $M\equiv\{M_0,M_1~|~M_i>0,~i\in\{0,1\};~M_0+M_1=\mathbb{I}_4\}$ on the composite system $\mathbb{C}^2_A\otimes\mathbb{C}^2_B$ and answers $x=y$ while $M_0$ clicks, otherwise answers $x\neq y$. However for uncorrelated (product states) quantum strategies we have the following no-go result. 

\begin{lemma}
	There is no perfect quantum uncorrelated strategy for the ALC task.
\end{lemma}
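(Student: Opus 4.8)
The plan is to turn "perfect success'' into a rigid geometric constraint on Bob's (equivalently Alice's) four encoding states and then kill it by a dimension count. Write the four bit-strings as $\{00,01,10,11\}$. A perfect uncorrelated strategy $(\mathrm E^q_A,\mathrm E^q_B,M)$ must obey $\Tr[M_0(\rho^x_A\otimes\rho^x_B)]=1$ for every $x$ (accept whenever $x=y$) and $\Tr[M_0(\rho^x_A\otimes\rho^y_B)]=0$ for every $x\neq y$ (reject otherwise), where $0\le M_0\le\mathbb I_4$ since $M_0+M_1=\mathbb I_4$ with $M_1\ge 0$.

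Next I would convert these trace identities into support inclusions, which is where $0\le M_0\le\mathbb I_4$ does the work. Spectrally decomposing $M_0=\sum_k\mu_k P_k$ with $0\le\mu_k\le 1$: for any state $\sigma$, $\Tr[M_0\sigma]=\sum_k\mu_k\Tr[P_k\sigma]\le\sum_k\Tr[P_k\sigma]=1$, with equality only if $\mu_k=1$ on the support of $\sigma$; hence $\Tr[M_0\sigma]=1$ forces $\mathrm{supp}(\sigma)\subseteq V_1$, the eigenspace of $M_0$ for eigenvalue $1$. Likewise $\Tr[M_0\sigma]=0$ forces $\mathrm{supp}(\sigma)\subseteq V_0:=\ker M_0$. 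Applying this to $\sigma=\rho^x_A\otimes\rho^x_B$ and to $\sigma=\rho^x_A\otimes\rho^y_B$ ($x\neq y$), and using $\mathrm{supp}(\rho_A\otimes\rho_B)=\mathrm{supp}(\rho_A)\otimes\mathrm{supp}(\rho_B)$, gives $\mathrm{supp}(\rho^x_A)\otimes\mathrm{supp}(\rho^x_B)\subseteq V_1$ and $\mathrm{supp}(\rho^x_A)\otimes\mathrm{supp}(\rho^y_B)\subseteq V_0$. Note this automatically handles mixed encodings, so no separate purity argument is needed.

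Finally I would invoke orthogonality: $V_1\perp V_0$, being eigenspaces of the Hermitian operator $M_0$ for distinct eigenvalues. Fix $x$ and any $y\neq x$, pick a unit vector $\ket a\in\mathrm{supp}(\rho^x_A)$, and take $\ket b\in\mathrm{supp}(\rho^x_B)$, $\ket{b'}\in\mathrm{supp}(\rho^y_B)$; then $\ket a\otimes\ket b\in V_1$ and $\ket a\otimes\ket{b'}\in V_0$, so $0=\braket{a|a}\braket{b|b'}=\braket{b|b'}$. Thus the four nonzero subspaces $\mathrm{supp}(\rho^{00}_B),\dots,\mathrm{supp}(\rho^{11}_B)$ are pairwise orthogonal inside $\mathbb C^2_B$, which is impossible since their dimensions would sum to at least $4>2$. (Symmetrically, comparing $\rho^y_A\otimes\rho^y_B$ with $\rho^x_A\otimes\rho^y_B$ forces Alice's four states to be pairwise orthogonal, so the obstruction is the same on both sides.) The only step that needs care is the trace-to-support passage in the second paragraph; the orthogonality and the pigeonhole on $\dim\mathbb C^2=2$ are then immediate, which is precisely why correlated states — and an entangled measurement basis — are what makes the quantum strategy perfect.
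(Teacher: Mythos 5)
Your proof is correct and follows essentially the same route as the paper: perfect success forces the product states for $x=y$ and for $x\neq y$ to lie in orthogonal subspaces of $\mathbb{C}^2\otimes\mathbb{C}^2$, which cannot happen. The only difference is that where the paper simply asserts ``it is not possible to satisfy the requirement in $\mathbb{C}^2\otimes\mathbb{C}^2$,'' you supply the missing justification --- the trace-to-support passage via the eigenvalue-$1$ eigenspace and the kernel of $M_0$, the resulting pairwise orthogonality of Bob's four local supports, and the dimension count $4>2$ --- which makes your version strictly more complete than the printed one.
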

\begin{proof}
	Since $x$ and $y$ are given randomly, Alice and Bob can obtain the strings in $16$ different possible ways. In $4$ cases the strings are identical and in other cases they are different. For perfect uncorrelated strategy: (i) Preparation by Alice (Bob) $\rho^x_A$ ($\sigma^y_B$) on receiving string $x$ ($y$) has to be different for all $x$ ($y$), (ii) the subspace spanned by the product states for identical strings must be orthogonal to the subspace spanned by the product states for different strings, i.e., $Tr[(\rho^x_A\otimes\sigma^y_B)M_i]=1$ for $x=y$, and $0$ otherwise, while $Tr[(\rho^x_A\otimes\rho^y_B)M_{i\oplus 1}]=1$ for $x\neq y$, and $0$ otherwise. 
	Now let's consider the preparation corresponding to $x=00$ and $y=00$. Evidently this has to belong to the subspace orthogonal to the subspace spanned by preparations corresponding to $x=00$ and $y=01,10,11$, which means $\rho^{00}_A\otimes\sigma^{00}_B$ has to be orthogonal to each of  $\{\rho^{00}_A\otimes\sigma^{01}_B,\rho^{00}_A\otimes\sigma^{10}_B,\rho^{00}_A\otimes\sigma^{11}_B\}$. But it is not possible to satisfy this requirement in $\mathbb{C}^2\otimes\mathbb{C}^2$. 
\end{proof}
Interestingly, if the players start their protocol with two-qubit entangled state then the ALC task can be perfectly won. 

\begin{proposition}
	There exists a perfect quantum entangled strategy for the ALC task.
\end{proposition}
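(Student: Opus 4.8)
The plan is to give an explicit protocol built from a shared Bell pair, local Pauli encodings indexed by the two-bit strings, and a Bell-basis measurement, and then certify perfect success by a short orthogonality argument.

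First I would set up the encoding. Identify $\{0,1\}^2$ with the Klein four-group $\mathbb{Z}_2^2$ and fix the projective unitary representation $U_{00}=\mathbb{I}$, $U_{01}=X$, $U_{10}=Z$, $U_{11}=XZ$, where $X,Z$ are the Pauli matrices; these obey $U_aU_b=\omega(a,b)\,U_{a\oplus b}$ for phases $\omega(a,b)\in\{\pm1,\pm i\}$. Let Alice and Bob share the maximally entangled state $\ket{\Phi^+}=\tfrac{1}{\sqrt2}(\ket{00}+\ket{11})\in\mathbb{C}^2_A\otimes\mathbb{C}^2_B$. On input $x$ Alice applies $U_x$ to her qubit and forwards it to Charlie; on input $y$ Bob applies $U_y$ to his qubit and forwards it to Charlie. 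Since each terminal acts only on its own half and uses no information about the other's string, the no-communication restriction between Alice and Bob is respected; moreover each terminal transmits a single qubit, whose Holevo capacity is $1$ (measurement dimension $2$), meeting the constraint of the ALC task.

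Next I would compute the state arriving at the verifier. Using the transpose identity $(\mathbb{I}\otimes A)\ket{\Phi^+}=(A^{T}\otimes\mathbb{I})\ket{\Phi^+}$ together with $U_a^{T}=\pm U_a$, one gets $(U_x\otimes U_y)\ket{\Phi^+}=(U_x\otimes\mathbb{I})(\mathbb{I}\otimes U_y)\ket{\Phi^+}=\pm(U_xU_y\otimes\mathbb{I})\ket{\Phi^+}$, which by the multiplication rule equals, up to a global phase, the Bell state $\ket{\Psi_{x\oplus y}}:=(U_{x\oplus y}\otimes\mathbb{I})\ket{\Phi^+}$. In particular Charlie holds $\ket{\Phi^+}$ exactly when $x=y$ (so that $x\oplus y=00$) and one of the three Bell states orthogonal to $\ket{\Phi^+}$ whenever $x\neq y$. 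Let Charlie then perform the projective two-outcome measurement $M_0=\proj{\Phi^+}$, $M_1=\mathbb{I}_4-M_0$, declaring ``$x=y$'' on outcome $0$ and ``$x\neq y$'' otherwise. Since $\proj{\Phi^+}$ is phase-independent, $\Tr\!\big[(U_x\otimes U_y)\proj{\Phi^+}(U_x\otimes U_y)^{\dagger}M_0\big]=|\braket{\Phi^+|\Psi_{x\oplus y}}|^{2}=\delta_{x,y}$ by orthonormality of the Bell basis, so the verifier answers correctly with probability $1$ on every one of the $16$ equiprobable inputs, i.e. the strategy is perfect.

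The construction is essentially forced once one recognizes the parallel with superdense coding and Bell-state discrimination, so there is no deep technical obstacle here. The one point that needs care is the phase/transpose bookkeeping guaranteeing that Bob's local $U_y$ together with Alice's local $U_x$ yields a state depending only on $x\oplus y$; this is precisely what makes a genuinely distributed perfect strategy possible without any communication between the terminals, and it is fully consistent with the preceding Lemma, since the protocol crucially exploits the shared entangled state rather than a product state.
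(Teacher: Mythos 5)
Your proof is correct and follows essentially the same route as the paper's: a shared maximally entangled two-qubit state, local Pauli encodings indexed by the two-bit strings, and a two-outcome Bell-projective measurement by the verifier, with success guaranteed by orthonormality of the Bell basis. The only differences are cosmetic (you use $|\Phi^+\rangle$ where the paper uses the singlet, and you spell out the transpose/phase bookkeeping that the paper leaves implicit).
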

\begin{proof}
	Let Alice and Bob share a two qubit singlet state $|\psi^-\rangle_{AB}=\frac{1}{\sqrt{2}}(|0\rangle_A\otimes|1\rangle_B-|1\rangle_A\otimes0\rangle_B)$. Consider the mapping $\{0,1\}^2\mapsto k$, with $k\in\{0,1,2,3\}$ as follows $00\mapsto 0,01\mapsto 1,10\mapsto 2,11\mapsto 3$. Whenever Alice (Bob) obtains a string $x$ ($y$) she (he) applies $\sigma_k$ ($\sigma_{k^\prime}$) on her (his) part of the singlet state and sends that part to the verifier, where $\sigma_0=\mathbb{I}$ and rest are Pauli matrices. Verifier obtains the state $\sigma_k\otimes\sigma_{k^\prime}|\psi^-\rangle_{AB}$ and performs the measurement, $M\equiv\{|\psi^-\rangle_{AB}\langle\psi^-|,\mathbb{I}-|\psi^-\rangle_{AB}\langle\psi^-|\}$. Whenever $k=k^\prime$, verifier gets the state $|\psi^-\rangle_{AB}$, otherwise he gets one of the rest three Bell states. Hence this protocol gives perfect success probability.
\end{proof}

\emph{Remark}: Note that for perfect quantum strategy both the entangled state for encoding and the measurement in entangled basis for decoding have been used. Furthermore, after the protocol verifier only knows whether Alice's and Bob's string are identical or not but no other information about the individual strings is revealed to him.  

{\it\textbf{Square bit theory:}} This particular toy model of GPT allows more generic state space structure than qubit state space.  The two dimensional state space $\mathcal{S}$ is the collection of all vectors $(x,y,1)^T\in\mathbb{R}^3$, with $-1\le x+y \le 1,~ -1\le x-y \le 1$, where $T$ denotes transposition. Shape of the state space turns out to be a square with four  pure (extremal) states,
\begin{eqnarray*}
	\omega_0&:=&(1,0,1)^T,\quad~~~\omega_1:=(0,1,1)^T,\nonumber\\
	\omega_2&:=&(-1,0,1)^T,\quad\omega_3:=(0,-1,1)^T.
\end{eqnarray*}
Specifying the outcome probability rule for the effect $e$ on state $\omega$ as $\Tr[e^T  \omega] \ge  0$, leads to the following four extremal effects,
\begin{eqnarray*}
	e_0&:=&(1,1,1)^T,\quad~~~~~~e_1:=(-1,1,1)^T,\nonumber\\
	e_2&:=&(-1,-1,1)^T,\quad e_3:=(1,-1,1)^T.
\end{eqnarray*}
The condition  $\Tr[e^T \omega] \leq 1$, $\forall~\omega$ and $\forall~ e$, implies a normalization factor $1/2$. The set of reversible  channels for the
system $\mathcal{S}$ turns out to be a finite group of
symmetries  (the dihedral  group of  order eight  $D_8$
containing   four  rotations   and  four   reflections)\cite{DallArno17}, explicitly given by,
\begin{equation*}\label{eq:single-system-unitaries}
\begin{aligned}
\mathcal{U}(\mathcal{S})=\{U_k^s: k=0,\ldots,3, s=\pm\},\\
U_k^s =
\begin{pmatrix}
\cos \frac{\pi k}2 & -s \sin \frac{\pi k}2 & 0 \\
\sin \frac{\pi k}2 & s \cos \frac{\pi k}2 & 0 \\
0 & 0 & 1
\end{pmatrix}.
\end{aligned}
\end{equation*}

State space  for composition of two such square bits $\mathcal{S}\otimes\mathcal{S}$ is a convex set in $\mathbb{R}^9$. The states $\Omega$ and the normalized effects $E$ thus can be represented by vectors in $\mathbb{R}^9$. A convenient representation can be given by $3\times  3$ real matrices rather than vectors in $\mathbb{R}^9$.  Any   bipartite  composition  naturally includes  $16$ factorized extremal states and $16$ factorized extremal effects given by,
\begin{align*}
\Omega_{4i+j} := \omega_i \otimes \omega_j^T,\qquad
E_{4i+j} := e_i \otimes e_j^T,
\end{align*}
where $i, j \in \{ 0, 1, 2, 3 \}$. One can also introduce non factorized matrices that play the role of entangled states and effects. Such an entangled state (effect) must be compatible with all factorized effects (states). Explicit calculation shows that one can have $8$ such entangled states $\{\Omega_i\}_{i=16}^{23}$ and $8$ such entangled effects $\{E_i\}_{i=16}^{23}$ that satisfy the requirement $\mbox{Tr}[E_j^T\Omega_i]\ge0$ for any $i \in [0, 15]$ and $j \in [16,  23]$, and for any $i \in  [16, 23]$ and $j\in [0, 15]$ (see Appendix \ref{App1}). While considering the bipartite theory containing entangled states and entangled effects the general consistency requirement must be fulfilled, i.e.,  all \emph{circuits} made of the allowed states and effects must give positive probabilities. For the two square-bit theory following four consistent composite models are possible \cite{DallArno17}:
\begin{enumerate}
	\item \emph{PR model:} All the  24 states $i\in[0,23]$;
	only the 16 factorized effects $j \in [0, 15]$;\vspace{-.2cm}
	\item \emph{HS  model:} Only the 16  factorized states $
	i\in[0,15]$; all the 24 effects $j \in [0, 23]$;\vspace{-.2cm}
	\item \emph{Hybrid models:}  Only 2 entangled states and
	effects  are included (along with factorized states and effects): (a) $i  \in[0,15]\cup \{20,
	22\}$   and  $j\in[0,15]\cup   \{20,22\}$;  (b) $i
	\in[0,15]\cup \{21,  23\}$ and  $j\in[0,15]\cup \{21,  23\}$;\vspace{-.2cm}
	\item \emph{Frozen Models:} Only one entangled state and
	effect is included (along with factorized states and effects), i.e.  $i \in[0,15]  \cup \{i'\}$
	and   $j\in[0,15]\cup  \{j'\}$   with  $i'   =  j'\in
	[16,23]$.
\end{enumerate}    
All these four models, like quantum theory, satisfy the no-restriction hypothesis\cite{footnote}. Since PR model consists of only factorized effects it allows more generic bipartite states and hence stronger nonlocal correlation than quantum theory resulting in violation of several principles \cite{vanDam,Brassard06,Pawlowski09,Navascues09,Fritz13}. HS model is the other extreme-- allows only factorized states and hence effects are more general than quantum. Clearly, HS model allows only local correlations and hence satisfies all bipartite principles involving space-like separated correlations. However it violates no-hypersignaling principle which imposes the restriction that signaling capacities of composite systems must be additive on signaling capacities of component subsystems \cite{DallArno17}.

While performing the ALC task in HS model, Alice and Bob can follow some product state encoding whereas the verifier has more freedom to choose the decoding measurement. On the other hand in PR model Alice and Bob have more freedom for the encoding strategy while the verifier's decoding strategy is restricted. However there is no perfect strategy in any of these two models. At this point it seems that Hybrid model and Frozen model may provide perfect success for ALC task as they allow entangled states as well as entangled effects. However we find that even in these two models it is not possible to win the ALC game with perfect success, which leads us to the following proposition (see Appendix \ref{App1} for the proof).

\begin{proposition}
	There exist no perfect strategy for the ALC task in HS model, in PR model, in Frozen model, and in Hybrid model.
\end{proposition}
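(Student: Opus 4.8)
The plan is to fix one of the four models $\mathcal{M}$ and show that a hypothetical perfect strategy leads to a contradiction after a finite reduction. Such a strategy consists of a shared two–square–bit state $\omega_{AB}$, local encoding (trace‑preserving) channels $\{T^x_A\}$, $\{T^y_B\}$ with $x,y\in\{0,1,2,3\}$, and a two‑outcome measurement $\{E_0,\,u_A\otimes u_B-E_0\}$ with $E_0$ admissible in $\mathcal{M}$, such that $\Omega^{xy}:=(T^x_A\otimes T^y_B)\omega_{AB}$ satisfies $\Tr[E_0^T\Omega^{xy}]=1$ for $x=y$ and $0$ otherwise. First I would record the no‑signalling structure: since the encodings are channels, the $A$‑marginal $\mu^x=\Tr_B\Omega^{xy}$ is independent of $y$ and the $B$‑marginal $\nu^y=\Tr_A\Omega^{xy}$ is independent of $x$. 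Then I would pass to extremal objects by linearity: decomposing $E_0$ into extremal effects of $\mathcal{M}$, each component must already reach the extreme value $1$ on every $\Omega^{xx}$ and $0$ on every $\Omega^{xy}$ with $x\neq y$, so we may take $E_0$ extremal; similarly each $\Omega^{xy}$ may be replaced by an extremal component of its own decomposition. Since the extremal states and effects of each model lie in the finite lists $\{\Omega_i\}_{i=0}^{23}$, $\{E_i\}_{i=0}^{23}$ described above (and in \cite{Appendix,DallArno17}), this leaves a finite feasibility problem, which I would settle model by model.

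For the \textbf{HS model} the shared state is separable, so $\Omega^{xy}=\sum_l q_l\,\alpha^x_l\otimes\beta^y_l$ with single‑square‑bit states $\alpha^x_l,\beta^y_l$; because $1$ is the maximal value of the bilinear functional $g(\alpha,\beta):=\Tr[E_0^T(\alpha\otimes\beta)]$, the conditions decouple term by term, and fixing any $l$ with $q_l>0$ and writing $\alpha^x:=\alpha^x_l$, $\beta^y:=\beta^y_l$ gives $g(\alpha^x,\beta^y)=[x=y]$. Now $\beta\mapsto g(\alpha^x,\beta)$ is an affine function on the two‑dimensional square vanishing at $\beta^y$ for the three $y\neq x$; if those three points are affinely independent it vanishes identically, contradicting $g(\alpha^x,\beta^x)=1$. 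Hence $\{\beta^y:y\neq x\}$ is collinear for every $x$, which forces either two of the $\beta^y$ to coincide---impossible, since Bob's encoding must be injective to let Charlie separate his inputs---or all four $\beta^y$ to be collinear, in which case the one‑dimensional version of the argument again contradicts $g(\alpha^x,\beta^x)=1$.

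For the \textbf{PR model} the admissible effects are the factorized (min‑tensor) ones, whose extremal representatives are essentially products $\hat e_A\otimes\hat e_B$ of single‑square‑bit extremal effects (the few genuinely non‑product extreme points of the effect set being checked by hand). From $\hat e_B\le u_B$ one gets $\Tr[(\hat e_A\otimes\hat e_B)^T\Omega]\le\Tr[(\hat e_A\otimes u_B)^T\Omega]=\Tr[\hat e_A^T\mu]$ with $\mu$ the $A$‑marginal of $\Omega$ (and $\hat e_A\otimes(u_B-\hat e_B)$ is a legitimate effect, positive even on the entangled states). Hence $\Tr[(\hat e_A\otimes\hat e_B)^T\Omega^{xx}]=1$ forces $\Tr[\hat e_A^T\mu^x]=1$, and symmetrically $\Tr[\hat e_B^T\nu^x]=1$, for every $x$; but then for any $x\neq y$, $\Tr[(\hat e_A\otimes(u_B-\hat e_B))^T\Omega^{xy}]=\Tr[\hat e_A^T\mu^x]-\Tr[(\hat e_A\otimes\hat e_B)^T\Omega^{xy}]=1-0=1$, so $\Tr[(u_B-\hat e_B)^T\nu^y]\ge1$, i.e.\ $\Tr[\hat e_B^T\nu^y]=0$, contradicting $\Tr[\hat e_B^T\nu^y]=1$.

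Finally, the \textbf{Hybrid} and \textbf{Frozen} models are the genuinely delicate cases, since they admit entangled states together with entangled effects and thus superficially resemble the quantum singlet‑plus‑Bell‑projector strategy. Here I would run a finite case analysis over how the two (Hybrid) or one (Frozen) available entangled states are allotted among the sixteen $\Omega^{xy}$, and over whether $E_0$ uses the available entangled effect. The decisive structural fact is that in the square‑bit composite every entangled effect evaluates to $\tfrac12$ on every factorized state; hence once the entangled states are spent on the ``equal'' configurations the ``unequal'' ones are necessarily factorized and cannot all be mapped to $0$ by any admissible $E_0$---precisely where the quantum analogy breaks, since the quantum winning effect is orthogonal to three further entangled (Bell) states with no counterpart here---while the no‑signalling constraints $\mu^x=\Tr_B\Omega^{xy}$, $\nu^y=\Tr_A\Omega^{xy}$, which force each ``equal'' state to share a marginal with every ``unequal'' one in its row and column, dispose of the residual cases. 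I expect this last step, and the Frozen model in particular as the minimal analogue of the quantum trick, to be the main obstacle: the case bookkeeping is heaviest there, and it is exactly where one must pin down why a single matched pair of entangled state and effect still does not suffice.
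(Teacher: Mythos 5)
Your treatment of the HS model is a nice, essentially self-contained alternative to the paper's table check (the paper verifies by inspection of its Tables III--IV that no extremal effect separates the equal-string states from the rest): the decoupling over the separable decomposition plus the observation that an affine function on the two-dimensional square cannot vanish at three affinely independent points while being $1$ at a fourth is clean and even model-independent. The rest of the proposal, however, has genuine gaps. First, in the PR case your argument hinges on reducing to a \emph{single} extremal product effect $\hat e_A\otimes\hat e_B$ with $\Tr[(\hat e_A\otimes\hat e_B)^T\Omega^{xx}]=1$ for every $x$. That reduction is not justified: $M_{eq}$ is only a \emph{conic} combination $\sum_i p_iF_i$ of extremal product effects, and while the conditions $\Tr[M_{eq}^T\Omega^{xy}]=0$ ($x\neq y$) do pass to each component, the condition $\Tr[M_{eq}^T\Omega^{xx}]=1$ does not force any single component to attain $1$ (e.g.\ two components each contributing $1/2$ evade your marginal argument: the inequalities $a_x+b_y\le 1$ for $x\neq y$ and $c_{xx}\le\min(a_x,b_x)$ are then consistent with $a_x=b_y=1/2$ throughout). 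The paper instead fixes the shared entangled state, enumerates the $8\times 8$ local reversible encodings (its Table V), and exhibits an unavoidable collision: the same output state $\Omega_{16}$ arises from an ``equal'' and an ``unequal'' pair of local operations, so no two-outcome decoding can separate them.

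Second, and more seriously, the Hybrid and Frozen cases are not proved, and the one structural fact you lean on there is false: in the square-bit composite the entangled effects do \emph{not} evaluate to $\tfrac12$ on factorized states --- by the paper's Table III they take the values $0$ or $1$ there (the values $\tfrac12$, $\tfrac32$, $-\tfrac12$ occur only when entangled effects meet entangled states, which is precisely why not all such pairings are mutually consistent). You also miss the idea that actually closes these cases in the paper: these models have almost no reversible dynamics. The Hybrid model admits only the two local reversible transformations $U_0^+$ and $U_2^+$ per party, so a party cannot encode four distinct strings by acting on her half of a shared entangled state; the Frozen model contains a single entangled state which is left invariant by all allowed transformations, so entangled encoding is trivial. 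Your proposed bookkeeping over how entangled states are ``allotted among the sixteen $\Omega^{xy}$'' never engages with this constraint on the encoding maps, which is where the argument actually lives.
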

An interesting point to note is that the optimal probability of success in ALC task for these models turn out to be $\frac{13}{16}$ which is same as the optimal success in classical theory. While investigating generalized NS correlations, a number of games have been studied where supra-quantum NS correlations outperform optimal quantum winning strategies \cite{Ambainis02,Pawlowski09,Oppenheim10,Brunner14,Banik15,Roy16,Ambainis16,Banik17}.There also exist games where quantum resources are as good as generalized NS correlations \cite{Brassard05,Cleve12,Arkhipov12}. On the other hand, it has been shown that even a generalized probabilistic local model can outperform quantum theory by allowing stronger `time-like' correlation \cite{DallArno17}. The ALC task, proposed in this letter, is a notable exception from all theses games: it can be won perfectly in quantum theory while several non-classical models having stronger 'space-like' and 'time-like' correlations do not provide a perfect strategy.         

PR model stands as a testimony that presence of nonlocal correlation (and hence presence of steerable/entangled state) is not a sufficient requirement to win the ALC game in a GPT. Naturally the question arises: is nonlocality\cite{Bell64, Brunner14} necessary for perfect winning the ALC task? Interestingly the answer is negative. We find that one can perfectly win the ALC task in Spekkens' toy-bit theory which is a local theory by construction \cite{Spekkens07}. The winning protocol in toy-bit theory is analogous to the quantum entangled protocol (see Appendix \ref{App2}). However the toy-bit theory is not a perfect GPT in true sense as it only allows some particular convex mixtures as valid states. Furthermore the elementary system of toy-bit theory does not satisfy the no-restriction hypothesis \cite{Janotta13}. 

On the other hand, from the example of Hybrid and Frozen models it is evident that even simultaneous presence of entangled states and entangled effects is not enough for perfect success of ALC in a GPT. One possible reason may be that the reversible dynamics in those theories are too restricted. 

At this point the question remains whether entangled/steerable states are necessary for perfect winning strategy of ALC task in an arbitrary GPT model. Answer to this question is not very obvious in general as the example of HS model suggests that information carrying capacity of composite systems can be super-additive. Therefore a more pragmatic question is whether entangled/steerable states are necessary for perfectly winning the ALC task in GPTs that respect the no-hypersignaling principle. This possibility requires further investigation. Another interesting research direction is to look for GPTs other than quantum theory that can achieve perfect success in ALC task while satisfying no-restriction hypothesis\cite{Janotta13}.     

\begin{acknowledgments}
%\emph{Acknowledgment:} 
We thank Guruprsad Kar for many stimulating discussions. We also thank Michele Dall’Arno for useful suggestions and clarifying typos in their work (private communication) \cite{DallArno17}. M.B. acknowledges the research grant of INSPIRE-faculty scheme [DST/INSPIRE/04/2017/002288] from 
Department Of Science \& Technology, Government of India. SSB is supported by the National Natural Science Foundation of China through
grant 11675136, the Foundational Questions Institute through grant FQXiRFP3-1325, the Hong Kong Research Grant Council through grant 17300918, and the John Templeton Foundation through grant 60609, Quantum Causal Structures. This publication was made possible through the support of the ID 61466 grant from the John Templeton Foundation, as part of the “The Quantum Information Structure of Spacetime (QISS)” Project (qiss.fr). The opinions expressed in this publication are those of the authors and do not necessarily reflect the views of the John Templeton Foundation.
\end{acknowledgments}
\begin{widetext}
\begin{appendix}
\section{ALC in Square-bit theory}\label{App1}
{\bf Elementary system:} The normalized state space $\mathcal{S}$ of the elementary system takes the shape of a square with the following four extremal states: 
\begin{equation}
\omega_{0}=(1,0,1)^T,~~\omega_{1}=(0,1,1)^T,~~\omega_{2}=(-1,0,1)^T,~~\omega_{3}=(0,-1,1)^T.
\end{equation}
Specifying the outcome-rule for an effect $e$ on the state $\omega$ as Tr$[{e}^T\omega]$ results in the following four normalized extremal effects:
\begin{equation}
e_{0}=\frac{1}{2}(1,1,1)^T,~~e_{1}=\frac{1}{2}(-1,1,1)^T,~~e_{2}=\frac{1}{2}(-1,-1,1)^T,~~e_{3}=\frac{1}{2}(1,-1,1)^T.
\end{equation}
The unit effect $u$ that gives Tr$[u^T\omega_i]=1$ for any $\omega_i$ is the vector $u=(0,0,1)^T$. This particular model of GPT satisfies the no-restriction hypothesis as it allows all vectors dual to the state space as valid effects. 
The set of reversible transformations $\mathcal{U}$ is the Dihedral group of order eight, given by: 
\begin{equation}
\mathcal{U}(\mathcal{S})=\{U_k^s:k=0,1,2,3,s=\pm1\},
\end{equation}
\begin{center}
	where
	$U_k^s=$
	\(\begin{pmatrix}
	$cos$\frac{\pi k}{2}& -s$ sin$\frac{\pi k}{2}& 0\\
	$sin$\frac{\pi k}{2}& s$ cos$\frac{\pi k}{2}& 0\\
	0 &0 & 1\\
	\end{pmatrix}\).
\end{center}
{\bf Bipartite composite system:} The states and effects corresponding to a composition of two elementary systems can be represented by $3\times3$ real matrices. Any bipartite composition should include all the factorized extremal states and factorized extremal effects given by:
\begin{equation}
\Omega_{4i+j}:=\omega_{i}\otimes\omega^T_{j},~~E_{4i+j}:=2~\left({e_i}\otimes{e}^T_j\right),~~i,j\in\{0,1,2,3\}.
\end{equation}
Here the factor $2$ is for normalization. One can introduce other matrices that play the role of entangled states and entangled effects. Any such entangled state (effect) must give positive probability over all factorized effects (states) according to the rule $\mbox{Tr}[E^T\Omega]$. The set of consistent normalized entangled states are given by, 
\begin{subequations}
	\begin{align}
	\Omega_{16} &=\frac{1}{2}\left(\omega_{1}\otimes \omega_{1}^T-\omega_{2}\otimes \omega_{2}^T+\omega_{2}\otimes \omega_{3}^T+\omega_{3}\otimes \omega_{2}^T\right),\\
	\Omega_{17} &=\frac{1}{2}\left(\omega_{0}\otimes \omega_{3}^T-\omega_{0}\otimes \omega_{0}^T+\omega_{1}\otimes \omega_{1}^T+\omega_{3}\otimes \omega_{0}^T\right),\\
	\Omega_{18} &=\frac{1}{2}\left(\omega_{0}\otimes \omega_{0}^T-\omega_{1}\otimes \omega_{1}^T+\omega_{1}\otimes \omega_{2}^T+\omega_{2}\otimes \omega_{1}^T\right),\\
	\Omega_{19} &=\frac{1}{2}\left(\omega_{0}\otimes \omega_{0}^T-\omega_{0}\otimes \omega_{3}^T+\omega_{1}\otimes \omega_{3}^T+\omega_{3}\otimes \omega_{2}^T\right),\\
	\Omega_{20} &=\frac{1}{2}\left(\omega_{0}\otimes \omega_{3}^T-\omega_{0}\otimes \omega_{0}^T+\omega_{1}\otimes \omega_{0}^T+\omega_{3}\otimes \omega_{1}^T\right),\\
	\Omega_{21} &=\frac{1}{2}\left(\omega_{0}\otimes \omega_{0}^T-\omega_{0}\otimes \omega_{1}^T+\omega_{1}\otimes \omega_{1}^T+\omega_{3}\otimes \omega_{2}^T\right),\\
	\Omega_{22} &=\frac{1}{2}\left(\omega_{1}\otimes \omega_{1}^T-\omega_{2}\otimes \omega_{1}^T+\omega_{2}\otimes \omega_{2}^T+\omega_{3}\otimes \omega_{0}^T\right),\\
	\Omega_{23} &=\frac{1}{2}\left(\omega_{0}\otimes \omega_{1}^T-\omega_{1}\otimes \omega_{1}^T+\omega_{1}\otimes \omega_{2}^T+\omega_{2}\otimes \omega_{0}^T\right),
	\end{align}
\end{subequations}
and the set of consistent normalized entangled effects are given by,
\begin{subequations}
	\begin{align}
	E_{16} &=\left(e_{0}\otimes e_{0}^T-e_{0}\otimes e_{3}^T+e_{1}\otimes e_{3}^T+e_{3}\otimes e_{2}^T\right),\\
	E_{17} &=\left(e_{1}\otimes e_{1}^T-e_{2}\otimes e_{2}^T+e_{2}\otimes e_{3}^T+e_{3}\otimes e_{2}^T\right),\\
	E_{18} &=\left(e_{0}\otimes e_{3}^T-e_{0}\otimes e_{0}^T+e_{1}\otimes e_{1}^T+e_{3}\otimes e_{0}^T\right),\\
	E_{19} &=\left(e_{0}\otimes e_{0}^T-e_{1}\otimes e_{1}^T+e_{1}\otimes e_{2}^T+e_{2}\otimes e_{1}^T\right),\\
	E_{20} &=\left(e_{0}\otimes e_{1}^T-e_{1}\otimes e_{1}^T+e_{1}\otimes e_{2}^T+e_{2}\otimes e_{0}^T\right),\\
	E_{21} &=\left(e_{1}\otimes e_{1}^T-e_{2}\otimes e_{1}^T+e_{2}\otimes e_{2}^T+e_{3}\otimes e_{0}^T\right),\\
	E_{22} &=\left(e_{0}\otimes e_{0}^T-e_{0}\otimes e_{1}^T+e_{1}\otimes e_{1}^T+e_{3}\otimes e_{2}^T\right),\\
	E_{23} &=\left(e_{0}\otimes e_{3}^T-e_{0}\otimes e_{0}^T+e_{1}\otimes e_{0}^T+e_{3}\otimes e_{1}^T\right).
	\end{align}
\end{subequations}
\begin{table}[b!]
	\centering
	\label{sq}
	\begin{tabular}{|c | c | c | c |}
		\hline 
		Model &  States &  Effects &  Transformations \\
		\hline\hline 
		PR Model & $\Omega_i,~i\in [0,23]$  &  $E_j,~j\in[0,15]$& $\mathcal{U}(\mathcal{S}\otimes \mathcal{S})$ \\
		\hline 
		HS Model & $\Omega_i,~i\in [0,15]$ & $E_j,~j\in[0,23]$ &  $\mathcal{U}(\mathcal{S}\otimes \mathcal{S})$  \\
		\hline 
		\multirow{2}{*}{Hybrid Model} & $\Omega_i,~i\in [0,15]\cup \{20,22\}$ &$E_j,~j\in[0,15]\cup \{20,22\}$  &  \multirow{2}{*}{$\left\{U_k^+\otimes U_l^+~|~k,l=0,2\right\}$}  \\
		& $\Omega_i,~i\in [0,15]\cup \{21,23\}$ & $E_j,~j\in[0,15]\cup \{21,23\}$ & \\
		\hline 
		\multirow{2}{*}{Frozen Model} & \multirow{2}{*}{$\Omega_i,~i\in [0,15]\cup \{n\};~n\in[16,23]$} & \multirow{2}{*}{$E_j,~j\in[0,15]\cup \{n\};~n\in[16,23]$} & $\left\{W^{0,1}\left(U_0^+\otimes U_0^+\right)\right\}, ~if~n\in [16,19]$ \\
		& & &$\left\{W^{0}\left(U_0^+\otimes U_0^+\right)\right\}, ~if~n\in [20,23]$ \\
		\hline
	\end{tabular}
	\caption{Four possible bipartite models in square-bit theory. $\mathcal{U}(\mathcal{S}\otimes\mathcal{S}):= \left\{W^{i}\left(U_j^{s_1}\otimes U_k^{s_2}\right)\right\}$ with $i\in\{0,1\};~j,k\in \{0,1,2,3\};~s_1,s_2\in \{\pm\}$, $W$ being the {\emph SWAP} map.}
\end{table}
The unit effect on the composite system is $u\otimes u^T$. Note that all the entangled effects on all the entangled states do not give rise to valid probabilities (see Table-III). As shown in \cite{DallArno17}, four consistent bipartite models are possible listed in Table-II.

\begin{table}[t!]
	\centering
	\label{tab-squit-all-prob}
	\begin{tabular}{| c |c | c | c | c |c | c | c | c |c | c | c | c |c | c | c | c |c || c | c | c |c | c | c | c |c|}
		\hline
		&\multicolumn{17}{|c|}{{\bf Factorized Effects}}&\multicolumn{8}{|c|}{{\bf Entangled Effects}}\\
		\hline
		&&$E_0$&$E_1$&$E_2$&$E_3$&$E_4$&$E_5$&$E_6$&$E_7$&$E_8$&$E_9$&$E_{10}$&$E_{11}$&$E_{12}$&$E_{13}$&$E_{14}$&$E_{15}$&$E_{16}$&$E_{17}$&$E_{18}$&$E_{19}$&$E_{20}$&$E_{21}$&$E_{22}$&$E_{23}$\\
		\hline\hline
		\parbox[t]{3mm}{\multirow{16}{*}{\rotatebox[origin=c]{90}{{\bf Factorized States}}}}&$\Omega_{0}$&$1$&$0$&$0$&$1$&$0$&$0$&$0$&$0$&$0$&$0$&$0$&$0$&$1$&$0$&$0$&$1$&$0$&$0$&$1$&$1$&$0$&$1$&$1$&$0$\\
		\cline{2-26}
		&$\Omega_{1}$&$1$&$1$&$0$&$0$&$0$&$0$&$0$&$0$&$0$&$0$&$0$&$0$&$1$&$1$&$0$&$0$&$1$&$0$&$0$&$1$&$0$&$1$&$1$&$0$\\
		\cline{2-26}
		&$\Omega_{2}$&$0$&$1$&$1$&$0$&$0$&$0$&$0$&$0$&$0$&$0$&$0$&$0$&$1$&$1$&$1$&$0$&$1$&$1$&$0$&$0$&$1$&$0$&$0$&$1$\\
		\cline{2-26}
		&$\Omega_{3}$&$0$&$0$&$1$&$1$&$0$&$0$&$0$&$0$&$0$&$0$&$0$&$0$&$0$&$0$&$1$&$1$&$0$&$1$&$1$&$0$&$0$&$1$&$1$&$0$\\
		\cline{2-26}
		&$\Omega_{4}$&$1$&$0$&$0$&$1$&$1$&$0$&$0$&$1$&$0$&$0$&$0$&$0$&$0$&$0$&$0$&$0$&$1$&$0$&$0$&$1$&$0$&$0$&$1$&$1$\\
		\cline{2-26}
		&$\Omega_{5}$&$1$&$1$&$0$&$0$&$1$&$1$&$0$&$0$&$0$&$0$&$0$&$0$&$0$&$0$&$0$&$0$&$1$&$1$&$0$&$0$&$0$&$1$&$1$&$0$\\
		\cline{2-26}
		&$\Omega_{6}$&$0$&$1$&$1$&$0$&$0$&$1$&$1$&$0$&$0$&$0$&$0$&$0$&$0$&$0$&$0$&$0$&$0$&$1$&$1$&$0$&$1$&$1$&$0$&$0$\\
		\cline{2-26}
		&$\Omega_{7}$&$0$&$0$&$1$&$1$&$0$&$0$&$1$&$1$&$0$&$0$&$0$&$0$&$0$&$0$&$0$&$0$ &$0$&$0$&$1$&$1$&$1$&$0$&$0$&$1$\\
		\cline{2-26}
		&$\Omega_{8}$&$0$&$0$&$0$&$0$&$1$&$0$&$0$&$1$&$1$&$0$&$0$&$1$&$0$&$0$&$0$&$0$ &$1$&$1$&$0$&$0$&$1$&$0$&$0$&$1$\\
		\cline{2-26}
		&$\Omega_{9}$&$0$&$0$&$0$&$0$&$1$&$1$&$0$&$0$&$1$&$1$&$0$&$0$&$0$&$0$&$0$&$0$ &$0$&$1$&$1$&$0$&$0$&$0$&$1$&$1$\\
		\cline{2-26}
		&$\Omega_{10}$&$0$&$0$&$0$&$0$&$0$&$1$&$1$&$0$&$0$&$1$&$1$&$0$&$0$&$0$&$0$&$0$ &$0$&$0$&$1$&$1$&$0$&$1$&$1$&$0$\\
		\cline{2-26}
		&$\Omega_{11}$&$0$&$0$&$0$&$0$&$0$&$0$&$1$&$1$&$0$&$0$&$1$&$1$&$0$&$0$&$0$&$0$ &$1$&$0$&$0$&$1$&$1$&$1$&$0$&$0$\\
		\cline{2-26}
		&$\Omega_{12}$&$0$&$0$&$0$&$0$&$0$&$0$&$0$&$0$&$1$&$0$&$0$&$1$&$1$&$0$&$0$&$1$ &$0$&$1$&$1$&$0$&$1$&$1$&$0$&$0$\\
		\cline{2-26}
		&$\Omega_{13}$&$0$&$0$&$0$&$0$&$0$&$0$&$0$&$0$&$1$&$1$&$0$&$0$&$1$&$1$&$0$&$0$ &$0$&$0$&$1$&$1$&$1$&$0$&$0$&$1$\\
		\cline{2-26}
		&$\Omega_{14}$&$0$&$0$&$0$&$0$&$0$&$0$&$0$&$0$&$0$&$1$&$1$&$0$&$0$&$1$&$1$&$0$ &$1$&$0$&$0$&$1$&$0$&$0$&$1$&$1$\\
		\cline{2-26}
		&$\Omega_{15}$&$0$&$0$&$0$&$0$&$0$&$0$&$0$&$0$&$0$&$0$&$1$&$1$&$0$&$0$&$1$&$1$ &$1$&$1$&$0$&$0$&$0$&$1$&$1$&$0$\\
		\hline\hline
		\parbox[t]{3mm}{\multirow{9}{*}{\rotatebox[origin=c]{90}{{\bf Entangled States}}}}&$\Omega_{16}$&$1$&$1$&$0$&$0$&$1$&$0$&$0$&$1$&$0$&$0$&$1$&$1$&$0$&$1$&$1$&$0$ &$\cellcolor{red!35}\frac{3}{2}$&$\frac{1}{2}$&\cellcolor{red!35}$-\frac{1}{2}$&$\frac{1}{2}$&$\frac{1}{2}$&$\frac{1}{2}$&$\frac{1}{2}$&$\frac{1}{2}$\\
		\cline{2-26}
		&$\Omega_{17}$&$0$&$1$&$1$&$0$&$1$&$1$&$0$&$0$&$1$&$0$&$0$&$1$&$0$&$0$&$1$&$1$ &$\frac{1}{2}$&\cellcolor{red!35}$\frac{3}{2}$&$\frac{1}{2}$&\cellcolor{red!35}$-\frac{1}{2}$&$\frac{1}{2}$&$\frac{1}{2}$&$\frac{1}{2}$&$\frac{1}{2}$\\
		\cline{2-26}
		&$\Omega_{18}$&$0$&$0$&$1$&$1$&$0$&$1$&$1$&$0$&$1$&$1$&$0$&$0$&$1$&$0$&$0$&$1$ &\cellcolor{red!35}$-\frac{1}{2}$&$\frac{1}{2}$&\cellcolor{red!35}$\frac{3}{2}$&$\frac{1}{2}$&$\frac{1}{2}$&$\frac{1}{2}$&$\frac{1}{2}$&$\frac{1}{2}$\\
		\cline{2-26}
		&$\Omega_{19}$&$1$&$0$&$0$&$1$&$0$&$0$&$1$&$1$&$0$&$1$&$1$&$0$&$1$&$1$&$0$&$0$ &$\frac{1}{2}$&\cellcolor{red!35}$-\frac{1}{2}$&$\frac{1}{2}$&\cellcolor{red!35}$\frac{3}{2}$&$\frac{1}{2}$&$\frac{1}{2}$&$\frac{1}{2}$&$\frac{1}{2}$\\
		\cline{2-26}
		&$\Omega_{20}$&$0$&$0$&$1$&$1$&$1$&$0$&$0$&$1$&$1$&$1$&$0$&$0$&$0$&$1$&$1$&$0$ &$\frac{1}{2}$&$\frac{1}{2}$&$\frac{1}{2}$&$\frac{1}{2}$&$\frac{1}{2}$&\cellcolor{red!35}$-\frac{1}{2}$&$\frac{1}{2}$&\cellcolor{red!35}$\frac{3}{2}$\\
		\cline{2-26}
		&$\Omega_{21}$&$1$&$0$&$0$&$1$&$1$&$1$&$0$&$0$&$0$&$1$&$1$&$0$&$0$&$0$&$1$&$1$ &$\frac{1}{2}$&$\frac{1}{2}$&$\frac{1}{2}$&$\frac{1}{2}$&\cellcolor{red!35}$-\frac{1}{2}$&$\frac{1}{2}$&\cellcolor{red!35}$\frac{3}{2}$&$\frac{1}{2}$\\
		\cline{2-26}
		&$\Omega_{22}$&$1$&$1$&$0$&$0$&$0$&$1$&$1$&$0$&$0$&$0$&$1$&$1$&$1$&$0$&$0$&$1$ &$\frac{1}{2}$&$\frac{1}{2}$&$\frac{1}{2}$&$\frac{1}{2}$&$\frac{1}{2}$&\cellcolor{red!35}$\frac{3}{2}$&$\frac{1}{2}$&\cellcolor{red!35}$-\frac{1}{2}$\\
		\cline{2-26}
		&$\Omega_{23}$&$0$&$1$&$1$&$0$&$0$&$0$&$1$&$1$&$1$&$0$&$0$&$1$&$1$&$1$&$0$&$0$ &$\frac{1}{2}$&$\frac{1}{2}$&$\frac{1}{2}$&$\frac{1}{2}$&\cellcolor{red!35}$\frac{3}{2}$&$\frac{1}{2}$&\cellcolor{red!35}$-\frac{1}{2}$&$\frac{1}{2}$\\
		\hline
	\end{tabular}
	\caption{The values of $\mbox{Tr}[E_i^T\Omega_j]$ for normalized effects $E_i$ and normalized states $\Omega_j$ are listed here, $i,j\in[0,23]$. Note that the values in the shaded cells do not correspond to valid probability measures.}
\end{table}

{\bf No perfect strategy for ALC in Square-bit theories:}
We are now in a position to prove our main result: that there exists no perfect strategy for the ALC task in HS model, in PR model, in Hybrid model, and in Frozen model. For that we first prove the following lemma.\\\\

\begin{lemma}\label{fact-lemma}
There is no perfect strategy for ALC task while following factorized encodings and factorized decodings.
\end{lemma}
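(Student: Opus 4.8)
The plan is to compress any factorized strategy into a single $4\times4$ real matrix $S$, the ``success matrix'' whose $(x,y)$ entry is the acceptance probability on the string pair $(x,y)$, to note that perfect success forces $S=(\delta_{xy})$, the $4\times4$ identity, and then to show that every such $S$ has rank at most $3$. The rank deficiency is the heart of the matter: unlike the two pure states of a classical bit, which are affinely independent, the four pure states of a square bit satisfy the single affine relation $\omega_0-\omega_1+\omega_2-\omega_3=0$ (the two diagonals of the square meet at its centre $(0,0,1)^T$), and this relation survives the passage through any factorized effect.

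First I would reduce to pure encodings. Let $E_{\mathrm{acc}}$ be Charlie's acceptance effect. If Alice encodes $x$ by a mixture $\omega^x_A=\sum_k p_k\nu_k$ of pure square-bit states and Bob encodes $y$ by $\omega^y_B=\sum_l q_l\mu_l$, then perfect success on the pair $(x,x)$ reads $1=\sum_{k,l}p_k q_l\,E_{\mathrm{acc}}(\nu_k\otimes\mu_l)$ with every summand in $[0,1]$, forcing $E_{\mathrm{acc}}(\nu_k\otimes\mu_l)=1$ on all components; positivity of $E_{\mathrm{acc}}$ together with perfect success on $(x,y)$, $x\neq y$, likewise forces $E_{\mathrm{acc}}(\nu_k\otimes\mu_l)=0$. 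Picking one pure component per string on each side therefore yields a perfect strategy with pure encodings $x\mapsto\omega_{a(x)}$, $y\mapsto\omega_{b(y)}$, with $a,b:\{0,1\}^2\to\{0,1,2,3\}$ indexing the four extremal states.

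Next I would expand $S$. A factorized acceptance effect is a combination $E_{\mathrm{acc}}=\sum_{i,j}c_{ij}\,e_i\otimes e_j^T$, so $S_{xy}=E_{\mathrm{acc}}(\omega_{a(x)}\otimes\omega_{b(y)})=\sum_{i,j}c_{ij}(e_i^T\omega_{a(x)})(e_j^T\omega_{b(y)})=(ACB^T)_{xy}$, where $A_{xi}=e_i^T\omega_{a(x)}$, $B_{yj}=e_j^T\omega_{b(y)}$ and $C=[c_{ij}]$. Each row of $A$ is one of the four vectors $v_i=(e_0^T\omega_i,e_1^T\omega_i,e_2^T\omega_i,e_3^T\omega_i)$, explicitly $v_0=(1,0,0,1)$, $v_1=(1,1,0,0)$, $v_2=(0,1,1,0)$, $v_3=(0,0,1,1)$, and linearity of effects turns $\omega_0-\omega_1+\omega_2-\omega_3=0$ into $v_0-v_1+v_2-v_3=0$. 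Hence the rows of $A$ lie in a $3$-dimensional subspace, $\operatorname{rank}A\le3$, and so $\operatorname{rank}S\le3$; but perfect success demands $S=(\delta_{xy})$, of rank $4$ — a contradiction. (Under the opposite acceptance convention $S$ is the all-ones matrix minus the identity, which is again of rank $4$, so the contradiction stands.) Since a factorized decoding in each of the four square-bit composites has exactly this form, the lemma follows.

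The step that needs the most care is choosing this invariant. A more pedestrian route — classify the factorized two-outcome decodings and argue they reduce to classical one-bit simultaneous-message-passing protocols, whose success is at most $13/16$ — is workable but awkward, since a factorized two-outcome measurement can be subtler than a product of local binary measurements: for instance $\tfrac12(e_0\otimes e_0^T+e_1\otimes e_1^T+e_2\otimes e_2^T+e_3\otimes e_3^T)$ is a legitimate acceptance effect that compares the outcomes of the two diagonal measurements, and such ``crossed'' effects would have to be excluded by hand. The rank argument sidesteps all of this, using only that $E_{\mathrm{acc}}$ is some linear combination of the product effects $e_i\otimes e_j^T$ and that the four vertices of the square are affinely dependent.
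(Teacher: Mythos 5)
Your proof is correct, and it takes a genuinely different route from the paper's. The paper fixes the canonical pure encoding $k\mapsto\omega_k$, tabulates $\Tr[E_j^T\Omega_i]$ for the sixteen extremal product effects (its Table~IV), observes that no such effect assigns $1$ exactly to $\{\Omega_0,\Omega_5,\Omega_{10},\Omega_{15}\}$, and then asserts that the same conclusion holds for every other factorized encoding; the quantification over all encodings and over non-extremal conic combinations of product effects is left essentially to inspection. You replace this case check with a single rank obstruction: the success matrix $S$ factors as $ACB^T$ through the evaluation vectors $v_i=(e_j^T\omega_i)_j$, and the affine dependence $\omega_0-\omega_1+\omega_2-\omega_3=0$ (equivalently, the fact that the square-bit state space spans only a $3$-dimensional vector space) forces $\operatorname{rank}S\le 3<4=\operatorname{rank}I_4$. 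Your computations check out ($v_0=(1,0,0,1)$, $v_1=(1,1,0,0)$, $v_2=(0,1,1,0)$, $v_3=(0,0,1,1)$, and $v_0-v_1+v_2-v_3=0$), and your convexity reduction to pure encodings is sound, though strictly redundant: a mixed encoding's rows are convex combinations of the $v_i$ and hence still lie in their $3$-dimensional span. What your approach buys beyond uniformity is generality: since $S_{xy}=\langle E_{\mathrm{acc}},\omega_{a(x)}\otimes\omega_{b(y)}\rangle$ is a bilinear form on $\mathbb{R}^3\times\mathbb{R}^3$ for \emph{any} effect $E_{\mathrm{acc}}$, factorized or not, the bound $\operatorname{rank}S\le 3$ persists for entangled decodings, so the same argument delivers the paper's HS-model no-go (its Proposition~2 in the supplement) with no additional table work; what the paper's approach buys is an explicit, finitely checkable certificate tied to the tables it reuses for the other models.
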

\begin{proof}
Consider the mapping $\{0,1\}^2\mapsto k$, with $k\in\{0,1,2,3\}$ as follows $00\mapsto 0,01\mapsto 1,10\mapsto 2,11\mapsto 3$. Let Alice and Bob encode their strings as $k\mapsto \omega_k$. For this encoding, while $x=y$ Charlie receives $\Omega_{0},\Omega_{5},\Omega_{10},\Omega_{15}$, otherwise he receives $\{\Omega_i~|~i\in[0,15]\setminus\{0,5,10,15\}\}$ (see Table-IV).
\begin{table}[h!]
	\centering
	\label{factorized}
	\begin{tabular}{| c |c | c | c | c |c | c | c | c |c | c | c | c |c | c | c | c |c |}
		\hline
		&\multicolumn{17}{|c|}{{\bf Factorized Effects}}\\
		\hline
		&&$E_0$&$E_1$&$E_2$&$E_3$&$E_4$&$E_5$&$E_6$&$E_7$&$E_8$&$E_9$&$E_{10}$&$E_{11}$&$E_{12}$&$E_{13}$&$E_{14}$&$E_{15}$\\
		\hline\hline
		\parbox[t]{3mm}{\multirow{16}{*}{\rotatebox[origin=c]{90}{{\bf Factorized States}}}}&\cellcolor{pink}$\Omega_{0}$&\cellcolor{pink}$1$&\cellcolor{pink}$0$&\cellcolor{pink}$0$&\cellcolor{pink}$1$&\cellcolor{pink}$0$&\cellcolor{pink}$0$&\cellcolor{pink}$0$&\cellcolor{pink}$0$&\cellcolor{pink}$0$&\cellcolor{pink}$0$&\cellcolor{pink}$0$&\cellcolor{pink}$0$&\cellcolor{pink}$1$&\cellcolor{pink}$0$&\cellcolor{pink}$0$&\cellcolor{pink}$1$\\
		\cline{2-18}
		&$\Omega_{1}$&$1$&$1$&$0$&$0$&$0$&$0$&$0$&$0$&$0$&$0$&$0$&$0$&$1$&$1$&$0$&$0$\\
		\cline{2-18}
		&$\Omega_{2}$&$0$&$1$&$1$&$0$&$0$&$0$&$0$&$0$&$0$&$0$&$0$&$0$&$1$&$1$&$0$&$0$\\
		\cline{2-18}
		&$\Omega_{3}$&$0$&$0$&$1$&$1$&$0$&$0$&$0$&$0$&$0$&$0$&$0$&$0$&$0$&$0$&$1$&$1$\\
		\cline{2-18}
		&$\Omega_{4}$&$1$&$0$&$0$&$1$&$1$&$0$&$0$&$1$&$0$&$0$&$0$&$0$&$0$&$0$&$0$&$0$\\
		\cline{2-18}
		&\cellcolor{pink}$\Omega_{5}$&\cellcolor{pink}$1$&\cellcolor{pink}$1$&\cellcolor{pink}$0$&\cellcolor{pink}$0$&\cellcolor{pink}$1$&\cellcolor{pink}$1$&\cellcolor{pink}$0$&\cellcolor{pink}$0$&\cellcolor{pink}$0$&\cellcolor{pink}$0$&\cellcolor{pink}$0$&\cellcolor{pink}$0$&\cellcolor{pink}$0$&\cellcolor{pink}$0$&\cellcolor{pink}$0$&\cellcolor{pink}$0$\\
		\cline{2-18}
		&$\Omega_{6}$&$0$&$1$&$1$&$0$&$0$&$1$&$1$&$0$&$0$&$0$&$0$&$0$&$0$&$0$&$0$&$0$\\
		\cline{2-18}
		&$\Omega_{7}$&$0$&$0$&$1$&$1$&$0$&$0$&$1$&$1$&$0$&$0$&$0$&$0$&$0$&$0$&$0$&$0$ \\
		\cline{2-18}
		&$\Omega_{8}$&$0$&$0$&$0$&$0$&$1$&$0$&$0$&$1$&$1$&$0$&$0$&$1$&$0$&$0$&$0$&$0$ \\
		\cline{2-18}
		&$\Omega_{9}$&$0$&$0$&$0$&$0$&$1$&$1$&$0$&$0$&$1$&$1$&$0$&$0$&$0$&$0$&$0$&$0$\\
		\cline{2-18}
		&\cellcolor{pink}$\Omega_{10}$&\cellcolor{pink}$0$&\cellcolor{pink}$0$&\cellcolor{pink}$0$&\cellcolor{pink}$0$&\cellcolor{pink}$0$&\cellcolor{pink}$1$&\cellcolor{pink}$1$&\cellcolor{pink}$0$&\cellcolor{pink}$0$&\cellcolor{pink}$1$&\cellcolor{pink}$1$&\cellcolor{pink}$0$&\cellcolor{pink}$0$&\cellcolor{pink}$0$&\cellcolor{pink}$0$&\cellcolor{pink}$0$\\
		\cline{2-18}
		&$\Omega_{11}$&$0$&$0$&$0$&$0$&$0$&$0$&$1$&$1$&$0$&$0$&$1$&$1$&$0$&$0$&$0$&$0$\\
		\cline{2-18}
		&$\Omega_{12}$&$0$&$0$&$0$&$0$&$0$&$0$&$0$&$0$&$1$&$0$&$0$&$1$&$1$&$0$&$0$&$1$ \\
		\cline{2-18}
		&$\Omega_{13}$&$0$&$0$&$0$&$0$&$0$&$0$&$0$&$0$&$1$&$1$&$0$&$0$&$1$&$1$&$0$&$0$ \\
		\cline{2-18}
		&$\Omega_{14}$&$0$&$0$&$0$&$0$&$0$&$0$&$0$&$0$&$0$&$1$&$1$&$0$&$0$&$1$&$1$&$0$ \\
		\cline{2-18}
		&\cellcolor{pink}$\Omega_{15}$&\cellcolor{pink}$0$&\cellcolor{pink}$0$&\cellcolor{pink}$0$&\cellcolor{pink}$0$&\cellcolor{pink}$0$&\cellcolor{pink}$0$&\cellcolor{pink}$0$&\cellcolor{pink}$0$&\cellcolor{pink}$0$&\cellcolor{pink}$0$&\cellcolor{pink}$1$&\cellcolor{pink}$1$&\cellcolor{pink}$0$&\cellcolor{pink}$0$&\cellcolor{pink}$1$&\cellcolor{pink}$1$\\
		\hline
	\end{tabular}
	\caption{Outcome probabilities of factorized normalized effects $E_i$ on factorized normalized states $\Omega_j$, $i,j\in[0,15]$. For the encoding considered in Lemma \ref{fact-lemma}, Charlie receives the states $\{\Omega_{0},\Omega_5,\Omega_{10},\Omega_{15}\}$ while $x=y$, as shown by shaded rows. However no extremal effect satisfies the requirement (\ref{fact-1}).}
\end{table}

For decoding, Charlie performs some measurement,
\begin{eqnarray}
M&\equiv&\{M_{eq},M_{neq}~|~M_{eq}=\sum_ip_iE_i,~M_{neq}=\sum_jq_jE_j,~ ~M_{eq}+M_{neq}=u\otimes u^T\},\\
&&\mbox{where},~p_i,q_j\ge 0;~E_i,E_j~\mbox{are factorized effects, with}~E_i\neq E_j.\nonumber
\end{eqnarray}
For perfect decoding we require,
\begin{subequations}\label{fact-1}
	\begin{align}
	\mbox{Tr}[M_{eq}^T\Omega_i]&=1,~~ \mbox{iff}~~ i\in\{0,5,10,15\};\\
	\mbox{Tr}[M_{neq}^T\Omega_i]&=1,~~ \mbox{iff}~~ i\in[0,15]\setminus\{0,5,10,15\}.
	\end{align}
\end{subequations}
However from Table-IV it is clear that no extremal effect satisfies Eq.(\ref{fact-1}) implying no perfect strategy. This holds true for any possible factorized encoding.
\end{proof}

We will now extend this Lemma and proof a no-go theorem for HS model as stated in the following proposition.
\begin{proposition}\label{prop-HS}
There is no perfect strategy for ALC task in HS model.
\end{proposition}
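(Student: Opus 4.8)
The plan is to establish Proposition~\ref{prop-HS} by upgrading Lemma~\ref{fact-lemma} in two independent directions: (i) show that the encoding constraint of the ALC task actually forces Alice and Bob to use the four extremal square--bit states \emph{bijectively}, so that the four ``equal'' states reaching Charlie form a permutation pattern among the product states $\Omega_{4i+j}$; and (ii) show that, for such a pattern, even a decoding that uses the eight entangled effects cannot recognize it. Lemma~\ref{fact-lemma} is then the special case in which the encoding is the identity bijection and $M_{eq}$ is factorized.

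\emph{Step 1: reduce to bijective extremal encodings.} In the HS model the bipartite state space is $\mathrm{conv}\{\Omega_0,\dots,\Omega_{15}\}$, so the most general state reaching Charlie for inputs $x,y$ is $\rho_{xy}=\sum_\lambda p_\lambda\,\alpha^\lambda_x\otimes\beta^\lambda_y$ with $\alpha^\lambda_x,\beta^\lambda_y\in\mathcal S$ and $\{p_\lambda\}$ a shared--randomness distribution, while his acceptance effect, being in the HS effect cone, is $M_{eq}=\sum_{i=0}^{23}c_iE_i$ with $c_i\ge 0$ and $0\le\Tr[M_{eq}^T\sigma]\le 1$ on all HS states. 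Since effects are bounded, a perfect strategy forces, for every $\lambda$ with $p_\lambda>0$, that $\Tr[M_{eq}^T(\alpha^\lambda_x\otimes\beta^\lambda_x)]=1$ for all $x$ and $\Tr[M_{eq}^T(\alpha^\lambda_x\otimes\beta^\lambda_y)]=0$ for all $x\neq y$; fix one such $\lambda$ and drop it. Expanding $\alpha_x,\beta_y$ over the extremal states and writing $f(k,l):=\Tr[M_{eq}^T\Omega_{4k+l}]\in[0,1]$ with supports $A_x,B_y\subseteq\{0,1,2,3\}$, one gets $f\equiv 1$ on $A_x\times B_x$ and $f\equiv 0$ on $A_x\times B_y$ for $x\neq y$. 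Comparing $A_x\times B_{x'}$ (where $f=0$) with $A_{x'}\times B_{x'}$ (where $f=1$) and using $B_{x'}\neq\emptyset$ gives $A_x\cap A_{x'}=\emptyset$; four pairwise disjoint nonempty subsets of a four--element set are singletons forming a bijection $x\mapsto\omega_{k(x)}$, and symmetrically $y\mapsto\omega_{l(y)}$. Hence Charlie must accept precisely the four product states $\Omega_{4k(x)+l(x)}$, which constitute a permutation pattern $P_\sigma=\{4a+\sigma(a):a=0,1,2,3\}$ meeting each block $\{4a,4a+1,4a+2,4a+3\}$ in exactly one state, and reject the other twelve product states.

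\emph{Step 2: no HS effect matches a permutation pattern.} All entries $\Tr[E_i^T\Omega_j]$ with $j\in[0,15]$ are nonnegative and all $c_i\ge 0$, so the requirement $\Tr[M_{eq}^T\Omega_j]=0$ on each of the twelve rejected product states forces $c_i=0$ for every $E_i$ whose acceptance set (among the sixteen product states) meets the rejected twelve, i.e. is not contained in $P_\sigma$. But no extremal effect has acceptance set inside $P_\sigma$: each factorized effect $E_{4p+q}$ accepts exactly the $2\times 2$ subgrid $\{\omega_k\otimes\omega_l:\omega_k\in S,\ \omega_l\in S'\}$ for two two--element sets $S,S'$, hence two entries in two of the blocks, so it cannot lie in a set meeting each block at most once; and each entangled effect $E_{16},\dots,E_{23}$ accepts at least five product states (read off the probability table), more than $|P_\sigma|=4$. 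Therefore every $c_i=0$, so $M_{eq}=0$ and $\Tr[M_{eq}^T\Omega_{4k(x)+l(x)}]=0\neq 1$, the required contradiction.

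The convexity reduction and the table inspections are routine; the main obstacle, and where the content lies, is Step~1 --- the support argument collapsing arbitrary (possibly mixed, shared--randomness--assisted) encodings into bijective extremal ones. That is what converts the $16$--pair structure of ALC into the crisp statement ``Charlie must recognize a permutation pattern,'' after which Step~2 is a short combinatorial check that the HS effect set contains no such recognizer.
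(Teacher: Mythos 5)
Your proof is correct, and it follows the same broad outline as the paper's (reduce to extremal product encodings, then check that no allowed effect separates the ``equal'' states from the rest), but it is genuinely more complete in two places where the paper's argument is only asserted. First, the paper's Lemma~\ref{fact-lemma} analyses one canonical bijective encoding $k\mapsto\omega_k$ and then states that ``this holds true for any possible factorized encoding,'' with mixed and shared-randomness encodings dispatched by an appeal to convexity; your Step~1 supplies the missing support argument showing that perfection forces each party's encoding to be a bijection onto the four extremal square-bit states, so that the accepted set is necessarily a permutation pattern $P_\sigma$. Second, where the paper's proof of Proposition~\ref{prop-HS} says only that ``from Table-III it is evident that even entangled effects are not good,'' you give the reason: nonnegativity of all coefficients and all table entries forces every extremal effect appearing in $M_{eq}$ to have its acceptance set inside $P_\sigma$, which is impossible because factorized effects accept $2\times2$ subgrids (two states in a single Alice-block) and each entangled effect accepts eight product states, exceeding $|P_\sigma|=4$. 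What your version buys is robustness: the table inspection is replaced by a structural counting argument, and the reduction to bijective extremal encodings is proved rather than assumed, so the no-go covers arbitrary correlated mixed encodings without further case analysis. The one caveat, shared equally with the paper, is that both arguments take the HS effect cone to be generated by the $24$ listed extremal effects; granting that (as the paper's definition of the model does), your proof is sound.
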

\begin{proof}
In HS model only factorized states are allowed for encoding while Charlie can perform more generalized decoding measurements as entangled effects are also allowed, i.e.,       
\begin{eqnarray}
M&\equiv&\{M_{eq},M_{neq}~|~M_{eq}=\sum_ip_iE_i,~M_{neq}=\sum_jq_jE_j,~ ~M_{eq}+M_{neq}=u\otimes u^T\},\\
&&\mbox{where},~p_i,q_j\ge 0;~E_i,E_j~\mbox{are factorized or entangled effects, with}~E_i\neq E_j.\nonumber
\end{eqnarray}
Whereas from Lemma-2 it follows that factorized effects do not satisfy requirement (\ref{fact-1}), from Table-III it is evident that even entangled effects are not good for perfect strategy.
\end{proof}

While considering PR model, the encodings are factorized as well as entangled whereas the decodings are factorized only. Since local unitaries map entangled states to entangled states and factorized states to factorized states, Lemma-2 leaves open only the entangled encoding. However, in the following proposition we prove a no-go result even for such entangled encodings.

\begin{proposition}
There is no perfect strategy for ALC task in PR model.
\end{proposition}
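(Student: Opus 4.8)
The plan is to exploit the defining restriction of the PR model: the only allowed effects are conical combinations of the sixteen factorized effects $E_0,\dots,E_{15}$, so Charlie's two decoding outcomes satisfy $M_{eq}=\sum_k p_k E_k$ and $M_{neq}=\sum_k q_k E_k$ with $p_k,q_k\ge 0$ and $M_{eq}+M_{neq}=u\otimes u^T$. First I would reduce to entangled encodings. Since a pair of local reversible transformations $U^{s_1}_j\otimes U^{s_2}_k$ sends factorized states to factorized states and entangled states to entangled states, Lemma~\ref{fact-lemma} already excludes every factorized encoding, so in any candidate perfect strategy Alice and Bob must share one of the eight entangled states $\Omega_{16},\dots,\Omega_{23}$ and apply input-dependent local reversible transformations; consequently every state $\rho_{xy}$ that reaches Charlie is again one of $\Omega_{16},\dots,\Omega_{23}$. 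Using the symmetry group $\mathcal U(\mathcal S\otimes\mathcal S)$ (local reversible transformations together with the Alice--Bob exchange, which is a symmetry of the task) I would normalize the shared state to a fixed representative and the ``input $0$'' transformations of both parties to the identity, so that only two finite assignments $x\mapsto g_x$ and $y\mapsto h_y$ into $D_8$ remain free.

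Next I would turn ``perfect strategy'' into conditions readable off the table of values $\mathrm{Tr}[E_i^T\Omega_j]$ (Table-III). A perfect strategy requires $\mathrm{Tr}[M_{eq}^T\rho_{xy}]=1$ precisely when $x=y$ (the opposite convention is handled symmetrically). Because each summand of a conical decomposition is nonnegative on every state, $\mathrm{Tr}[M_{eq}^T\rho]=0$ forces every factorized effect in the support of $M_{eq}$ to vanish on $\rho$, and likewise for $M_{neq}$. Hence every effect appearing in $M_{eq}$ must vanish on all twelve received states $\rho_{xy}$ with $x\neq y$, and every effect appearing in $M_{neq}$ must vanish on all four states $\rho_{xx}$; in addition the ``equal'' and ``unequal'' state-sets must be disjoint, since no state can be both accepted and rejected with certainty, and the two conical combinations must sum to $u\otimes u^T$.

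I would then close the argument by inspection of Table-III, whose key rigidity is that each factorized effect equals $1$ on exactly four of the eight entangled states and $0$ on the other four, while each entangled state is accepted by exactly eight of the sixteen factorized effects. Thus a factorized effect can vanish on all the $x\neq y$ states only if those states occupy at most four distinct entangled states, all contained in the four-element zero set of that effect --- and symmetrically for the $x=y$ states. Running through the finitely many admissible assignments $x\mapsto g_x$, $y\mapsto h_y$ (few, after the symmetry normalization), one checks that whenever a nonzero conical combination of factorized effects vanishes on all the $x\neq y$ states it cannot simultaneously take the value $1$ on all four $x=y$ states, or else the complementary combination needed for $M_{neq}$ does not exist, or the two cannot add to $u\otimes u^T$; together with Lemma~\ref{fact-lemma} this yields the claim. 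The step I expect to be the genuine obstacle is this final case check: the only things that keep it short are the symmetry reduction and the rigid four-versus-eight incidence pattern between factorized effects and entangled states, which immediately kills most candidate supports for $M_{eq}$ and $M_{neq}$; everything else is bookkeeping.
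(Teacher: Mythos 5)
Your proposal is correct and follows essentially the same route as the paper: reduce to entangled encodings via the factorized-encoding lemma (Lemma~\ref{fact-lemma}), observe that local reversible operations then confine every received state to the orbit $\{\Omega_{16},\dots,\Omega_{23}\}$, and finish with a finite check over the assignments of strings to elements of $D_8$. The only real difference lies in what that final check establishes: the paper's check exhibits a collision --- the same $\Omega_j$ is received both for some $x=y$ and some $x\neq y$ pair, which rules out any decoding measurement immediately --- whereas you additionally invoke the four-versus-eight incidence pattern of factorized effects on entangled states, a condition that would only matter if disjointness of the two state-sets could be achieved; both you and the paper leave the exhaustive enumeration asserted rather than displayed.
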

\begin{proof}
Let Alice and Bob share the entangled state $\Omega_{16}$. For encoding they can apply local reversible operations on their respective parts depending on the strings they receive. In this case eight local reversible actions are possible for each party. Under these operations the transformed states have been shown in Table-V.  
\begin{table}[b!]
	\centering
	\label{Unitary}
	\begin{tabular}{| c |c || c | c | c |c | c | c | c |c |}
		\hline
		&\multicolumn{9}{|c|}{{\bf Alice's action}}\\
		\hline
		&&$U^+_0$&$U^+_1$&$U^+_2$&$U^+_3$&$U^-_0$&$U^-_1$&$U^-_2$&$U^-_3$\\
		\hline\hline
		\parbox[t]{3mm}{\multirow{8}{*}{\rotatebox[origin=c]{90}{{\bf Bob's action}}}}&$U^+_0$&$\cellcolor{pink}\Omega_{16}$&$\Omega_{17}$&$\cellcolor{yellow}\Omega_{18}$&$\Omega_{19}$&$\cellcolor{yellow}\Omega_{23}$&$\Omega_{22}$&$\cellcolor{yellow}\Omega_{21}$&$\Omega_{20}$\\
		\cline{2-10}
		&$U^+_1$&$\cellcolor{yellow}\Omega_{17}$&$\Omega_{18}$&$\cellcolor{yellow}\Omega_{19}$&$\Omega_{16}$&$\cellcolor{yellow}\Omega_{20}$&$\Omega_{23}$&$\cellcolor{pink}\Omega_{22}$&$\Omega_{21}$\\
		\cline{2-10}
		&$U^+_2$&$\Omega_{18}$&$\Omega_{19}$&$\Omega_{16}$&$\Omega_{17}$&$\Omega_{21}$&$\Omega_{20}$&$\Omega_{23}$&$\Omega_{22}$\\
		\cline{2-10}
		&$U^+_3$&$\Omega_{19}$&$\Omega_{16}$&$\Omega_{17}$&$\Omega_{18}$&$\Omega_{22}$&$\Omega_{21}$&$\Omega_{20}$&$\Omega_{23}$\\
		\cline{2-10}
		&$U^-_0$&$\Omega_{20}$&$\Omega_{23}$&$\Omega_{22}$&$\Omega_{21}$&$\Omega_{17}$&$\Omega_{18}$&$\Omega_{19}$&$\Omega_{16}$\\
		\cline{2-10}
		&$U^-_1$&$\Omega_{21}$&$\Omega_{20}$&$\Omega_{23}$&$\Omega_{22}$&$\Omega_{18}$&$\Omega_{19}$&$\Omega_{16}$&$\Omega_{17}$\\
		\cline{2-10}
		&$U^-_2$&$\cellcolor{yellow}\Omega_{22}$&$\Omega_{21}$&$\cellcolor{yellow}\Omega_{20}$&$\Omega_{23}$&$\cellcolor{pink}\Omega_{19}$&$\Omega_{16}$&$\cellcolor{yellow}\Omega_{17}$&$\Omega_{18}$\\
		\cline{2-10}
		&$U^-_3$&$\cellcolor{yellow}\Omega_{23}$&$\Omega_{22}$&$\cellcolor{pink}\Omega_{21}$&$\Omega_{20}$&$\cellcolor{yellow}\Omega_{16}$&$\Omega_{17}$&$\cellcolor{yellow}\Omega_{18}$&$\Omega_{19}$\\
		\hline
	\end{tabular}
	\caption{Transformation of the state $\Omega_{16}$ under local reversible actions $U_{i}^{s_1}\otimes U_{j}^{s_2}[\Omega_{16}]:=U_{i}^{s_1}\Omega_{16}\left(U_{j}^{s_2}\right)^T$, where $i,j\in\{0,1,2,3\}$ and $s_1,s_2\in\{\pm\}$. Note that under the actions $U_{i}^{s_1}\otimes U_{j}^{s_2}$ with $s_1=s_2$ the state belongs in the group $[16,\cdots,19]$ and when $s_1\neq s_2$ it belongs in $[20,\cdots,23]$. Actually this fact is a generic feature: for $s_1=s_2$ the groups $\mathcal{G}_1\equiv\{\Omega_{16},\Omega_{17},\Omega_{18},\Omega_{19}\}$ and $\mathcal{G}_2\equiv\{\Omega_{20},\Omega_{21},\Omega_{22},\Omega_{23}\}$ are closed while for $s_1\neq s_2$,  $\mathcal{G}_1\leftrightarrow\mathcal{G}_2$. For the particular encoding strategy considered below, the encoded states have been shown by pink whenever $x=y$ and by yellow when $x\ne y$.}
\end{table}

Now consider an encoding as follows: for Alice $00\mapsto U^+_0$, $01\mapsto U^-_0$,$10\mapsto U^+_2$,$11\mapsto U^-_2$; for Bob $00\mapsto U^+_0$, $01\mapsto U^-_2$,$10\mapsto U^-_3$,$11\mapsto U^+_1$. The encoded states have been shown in Table-V by pink whenever $x=y$ and by yellow when $x\ne y$. While decoding, Charlie needs to perform a two-outcome measurement such that the effect corresponding to $x=y$ clicks only on the pink colored states and the other effect only on the yellow colored states. However such a measurement is not possible since in multiple cases the same states have been assigned two different colors. For example, under the actions $U^+_0\otimes U^+_0$ (corresponds to $x=y$) and $U^-_0\otimes U^-_3$ (corresponds to $x\ne y$) Charlie obtains the same encoded state $\Omega_{16}$. Considering other encodings it also turns out to be the same. 
\end{proof}
\begin{proposition}
There is no perfect strategy for ALC task in Hybrid model.	
\end{proposition}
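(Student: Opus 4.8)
\emph{Proof plan.} The plan is to reduce to a handful of extremal configurations and then show that each is too coarse to realise the equality pattern. First I would invoke affinity: the ALC success probability is affine in the shared state, in each party's (possibly randomised) local encoding, and in Charlie's decoding effect, so it suffices to treat a pure extremal shared state $\omega_{AB}$, deterministic local reversible encodings (as in the quantum and PR-model protocols), and a fixed two-outcome measurement with effects drawn from the Hybrid model. Recall from Table~II that here the only entangled states are $\{\Omega_{20},\Omega_{22}\}$ (or $\{\Omega_{21},\Omega_{23}\}$ for the second variant), the only entangled effects are $\{E_{20},E_{22}\}$ (or $\{E_{21},E_{23}\}$), and each party's allowed local reversible operations are just $\{U_0^+,U_2^+\}$. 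I would then split on whether $\omega_{AB}$ is factorized.

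If $\omega_{AB}$ is factorized, every local operation preserves factorizability, so Charlie always receives one of the sixteen factorized extremal states $\Omega_0,\dots,\Omega_{15}$, with the $x$-versus-$y$ equality structure exactly as set up in Lemma~\ref{fact-lemma}; that lemma already excludes a perfect strategy with factorized decoding. To finish this case I would check, directly from Table~III, that the two extra effects $E_{20},E_{22}$ also take only the values $0$ and $1$ on every factorized state and merely impose a fixed bipartition of $\Omega_0,\dots,\Omega_{15}$ that cannot be made to agree with $\{x=y\}$ for any choice of the local encodings — a short finite verification.

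If $\omega_{AB}$ is entangled, say $\omega_{AB}=\Omega_{20}$ (the remaining choices being identical after relabelling), I would use that the Hybrid-model reversible transformations $U_k^+\otimes U_l^+$ with $k,l\in\{0,2\}$ are symmetries of the Hybrid state space which necessarily permute its factorized extremal states among themselves, hence also permute $\{\Omega_{20},\Omega_{22}\}$ among themselves; thus Charlie always receives a state in the two-element set $\{\Omega_{20},\Omega_{22}\}$. Writing Alice's and Bob's encodings as $x\mapsto U^+_{a(x)}$ and $y\mapsto U^+_{b(y)}$ with $a,b\colon\{0,1\}^2\to\{0,2\}$, the received state is $g\!\left(a(x),b(y)\right)$ for some $g\colon\{0,2\}^2\to\{\Omega_{20},\Omega_{22}\}$, and a perfect decoding would force this to depend on $(x,y)$ only through whether $x=y$. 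But $|\{0,1\}^2|=4>2=|\{0,2\}|$, so $a(x_1)=a(x_2)$ for some $x_1\neq x_2$, and fixing $y=x_1$ then makes Charlie receive the same state for the input $(x_1,y)$ (on which he must answer ``equal'') and for $(x_2,y)$ (on which he must answer ``unequal'') — a contradiction.

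The main obstacle is not any single computation but making the case split exhaustive: I would need the affinity argument (as in the classical analysis) to genuinely collapse all randomised or mixed shared states and strategies onto the two extremal cases, and I would need to extend Lemma~\ref{fact-lemma} to decodings that use the entangled effects, which I expect to settle purely by reading the $0/1$ entries of Table~III. Once those bookkeeping points are secured, the heart of the argument is the elementary pigeonhole of the previous paragraph, forced by the fact that the Hybrid model offers only two local reversible operations per party.
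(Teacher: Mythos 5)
Your proposal is correct and follows essentially the same route as the paper: the factorized-encoding case is disposed of by the HS-model no-go (whose effect set contains the Hybrid model's, so no separate Table-III check is really needed), and the entangled-encoding case is killed by the observation that each party has only the two local reversible operations $U_0^+,U_2^+$, which is exactly the pigeonhole the paper invokes when it says four strings cannot be encoded reliably with two operations. Your write-up merely makes explicit the convexity reduction and the "same received state for an equal and an unequal pair" contradiction that the paper leaves implicit.
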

\begin{proof}
Consider the Hybrid models with extremal states $\Omega_i,~i\in [0,15]\cup \{20,22\}$  and extremal effects $E_j,~j\in[0,15]\cup \{20,22\}$. According to Proposition \ref{prop-HS} factorized encodings will not give the perfect success. So, Alice and Bob can start their protocol with one of the entangled states (say) $\Omega_{20}$. This model allows only two local reversible operations $\{U_0^+,U_2^+\}$ on each side. Therefore they cannot encode their four different strings reliably using two such operations and hence no perfect strategy is possible even using entangled encodings and entangled decodings. Similar argument holds true for the other Hybrid model.  
\end{proof}	

\begin{proposition}
There is no perfect strategy for ALC task in Frozen model.	
\end{proposition}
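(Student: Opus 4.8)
The plan is to follow the Hybrid-model argument and split according to the composite resource Alice and Bob use to encode; by convexity it is enough to take this resource extremal. \textbf{Case 1 (factorized encoding).} Here each party independently prepares a single square-bit state depending on its input (equivalently, the shared state is one of the factorized $\Omega_i$, $i\in[0,15]$, acted on only by the trivial local operations available, which preserve factorisation). Charlie's admissible decoding effects lie in $\{E_j\mid j\in[0,15]\cup\{n\}\}$, a subset of the full effect set of the HS model, so a perfect Frozen strategy with factorized encoding would in particular be a perfect HS strategy; Proposition~\ref{prop-HS} (built on Lemma~\ref{fact-lemma} and Table-III) forbids this.

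\textbf{Case 2 (entangled encoding).} Now Alice and Bob share the single entangled state $\Omega_n$, $n\in[16,23]$, and encode by local reversible operations on their halves, as in the quantum singlet protocol and the PR-model analysis. But by Table-II the reversible transformation group of the Frozen model is $\{\mathbb{I},W\}$ when $n\in[16,19]$ and $\{\mathbb{I}\}$ when $n\in[20,23]$, with $W$ the SWAP map; since $W$ is not a product transformation, the only \emph{local} reversible operation either party can apply is the identity. Hence each of Alice and Bob is forced to map all four of its two-bit strings to a single state, Charlie receives the fixed state $\Omega_n$ regardless of $(x,y)$, and he cannot distinguish $x=y$ from $x\neq y$ better than a blind guess. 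So no perfect entangled strategy exists, and with Case~1 this proves the proposition.

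\textbf{Main obstacle.} The only delicate point is the completeness of the dichotomy, i.e.\ ruling out \emph{non-reversible} local maps (discard-and-reprepare, and the like) in the encoding step. I would argue that applying such a map to one half of $\Omega_n$ cannot increase the distinguishability of the resulting bipartite inputs: it either effectively destroys the $\Omega_n$-correlation and thereby reduces to the factorized case already handled, or still leaves Charlie unable to meet a perfect-distinguishability condition of the type of Eq.~(\ref{fact-1}), given that each channel has Holevo capacity $1$ (hence measurement dimension at most $2$). Making this reduction precise — and checking the $n\in[16,19]$ versus $n\in[20,23]$ bookkeeping against Table-II — is essentially all the work; once one is inside either of the two main cases the conclusion is immediate.
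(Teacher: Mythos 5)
Your proof follows essentially the same route as the paper's: factorized encodings are excluded by the HS-model no-go (Proposition~\ref{prop-HS}), and entangled encodings are excluded because the Frozen model contains a single entangled state whose only allowed reversible dynamics (identity, and possibly the non-local SWAP, per Table-II) give Alice and Bob no way to encode their strings. The paper's own proof is terser---it simply declares the entangled-state strategy ``trivial'' since only one such state exists---whereas you make the same point explicit via the transformation group; your closing worry about non-reversible local maps is a reasonable extra caution, but it lies outside what the paper itself addresses.
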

\begin{proof}
Consider the frozen model with extremal states $\Omega_i,~i\in [0,15]\cup \{16\}$ and extremal effects $E_j,~j\in[0,15]\cup \{16\}$. In this case also factorized encodings are not good (Proposition \ref{prop-HS}). On the other hand, strategy that starts with sharing entangled state is trivial in this case as this model allows only one such state. Similar reasoning also holds true for other Frozen models.   
\end{proof}	

\section{ALC in Spekkens' toy-bit model}\label{App2}
This particular toy theory is based on a principle, namely \emph{knowledge balance principle} (KBP), according to which in a state of maximal knowledge the amount of knowledge one possesses about the ontic state of the system must equal the amount of knowledge she/he lacks \cite{Spekkens07}.

{\bf Elementary system:} For an elementary system the number of questions in the canonical set is two, and consequently the number of \emph{ontic} states is four. Denote the four ontic states
as $`1$',$`2$',$`3$', and $`4$'. A pure \emph{epistemic} state is a probability distribution $\{\vec{p}=(p_1,p_2,p_3,p_4)^T|~p_i\in\{0,1/2\},~\&~\sum_{i=1}^{4}p_i=1\}$, over the ontic states. Mixed epistemic states an be obtained by considering convex mixing of pure states as defined in \cite{Spekkens07}. In accordance with KBP, there exist six pure \emph{epistemic} states (state with maximal knowledge) for an elementary system that are given by,
\begin{subequations}
	\begin{align}
	1\vee 2\equiv \left(\frac{1}{2},\frac{1}{2},0,0\right)^T,\quad
	3\vee 4\equiv \left(0,0,\frac{1}{2},\frac{1}{2}\right)^T,\quad
	1\vee 3\equiv \left(\frac{1}{2},0,\frac{1}{2},0\right)^T,\\
	2\vee 4\equiv \left(0,\frac{1}{2},0,\frac{1}{2}\right)^T,\quad
	1\vee 4\equiv \left(\frac{1}{2},0,0,\frac{1}{2}\right)^T,\quad
	2\vee 3\equiv \left(0,\frac{1}{2},\frac{1}{2},0\right)^T.
	\end{align}
\end{subequations}
Here the symbol $`\vee$' means disjunction which reads as `or'. These epistemic states can be viewed as in Fig.\ref{epis-sin}   
\begin{figure}[h!]
	\subfloat[$1\vee 2$]{\includegraphics[width = 2cm]{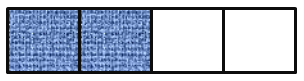}}\hspace{.5cm} 
	\subfloat[$3\vee 4$]{\includegraphics[width = 2cm]{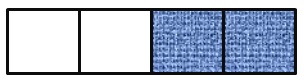}}\hspace{.5cm}
	\subfloat[$1\vee 3$]{\includegraphics[width = 2cm]{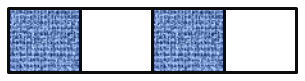}}\hspace{.5cm}
	\subfloat[$2\vee 4$]{\includegraphics[width = 2cm]{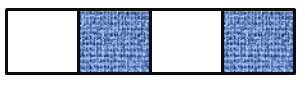}}\hspace{.5cm}
	\subfloat[$	1\vee 4$]{\includegraphics[width = 2cm]{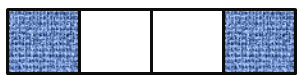}}\hspace{.5cm}
	\subfloat[$2\vee 3$]{\includegraphics[width = 2cm]{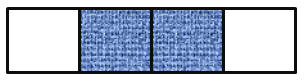}} 
	\caption{Each box denotes an ontic state, ranging $`1'$ to $`4'$ from left to right. Epistemic states are distributions on the ontic states. For example $1\vee 2$ denotes distribution on ontic state $`1'$ and $`2'$, i.e., on the first two boxes from left as shown in (a).}
	\label{epis-sin}
\end{figure}

One can introduce several quantum like features, viz., convex combination, coherent superposition, in this toy theory. State transformation as well as measurement rule are defined in accordance with KBP. Transformations are given by permutations of the ontic states and can be represented as cycles. For example, the cycle $(a)(bcd)$ means $a\mapsto a$ and $b\mapsto c\mapsto d\mapsto b$. While $4$-element permutation group contains 24 elements, only a few of them are compatible with KBP. Four allowed transformations that will be relevant for our purpose are,
\begin{subequations}\label{toy-unitary}
	\begin{align}
	U_0&=(1)(2)(3)(4),~~~~~~U_1=(12)(34),\\ 
	U_2&=(13)(24),~~~~~~~~~~~U_3=(14)(23).
	\end{align}
\end{subequations} 

{\bf Pairs of elementary systems:} In this case number of ontic states are $16$ that are represented by $a.b$, with $a,b\in\{1,2,3,4\}$. Pure epistemic states are of two type:\\
$~~~~~~~$Type-1: $(a\vee b).(c\vee d)\equiv (a.c)\vee (a.d)\vee (b.c)\vee (b.d)$; where $a,b,c,d\in{1,2,3,4}$ and $a\ne b,~c\ne d$.\\
$~~~~~~~$Type-2: $(a.e)\vee (b.f)\vee (c.g)\vee (d.f)$; where $a,b,c,d,e,f,g,h\in\{1,2,3,4\}$, $a\neq b\neq c\neq d$ and $e\neq f\neq g\neq h$.\\
While first type corresponds to factorized states, the later one corresponds to entangled states. Four such entangled states relevant to our purpose are,
\begin{subequations}
	\begin{align}
	\psi_0:=(1.1)\vee(2.2)\vee(3.3)\vee(4.4),\\
	\psi_1:=(1.2)\vee(2.1)\vee(3.4)\vee(4.3),\\
	\psi_2:=(1.3)\vee(2.4)\vee(3.1)\vee(4.2),\\
	\psi_3:=(1.4)\vee(2.3)\vee(3.2)\vee(4.1).
	\end{align}
\end{subequations}
These four states are analogous to the four Bell states of two-qubit quantum system and pictorially they can be represented as in Fig.\ref{epis-bi}.

\begin{figure}[h!]
	\subfloat[$\psi_0$]{\includegraphics[width = 2cm]{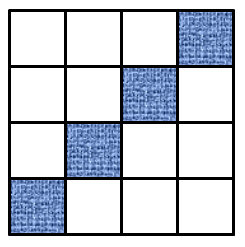}}\hspace{1cm} 
	\subfloat[$\psi_1$]{\includegraphics[width = 2cm]{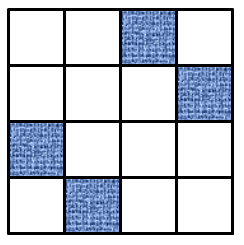}}\hspace{1cm}
	\subfloat[$\psi_2$]{\includegraphics[width = 2cm]{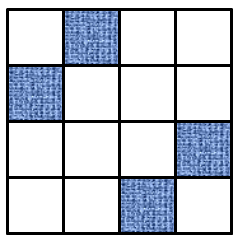}}\hspace{1cm}
	\subfloat[$\psi_3$]{\includegraphics[width = 2cm]{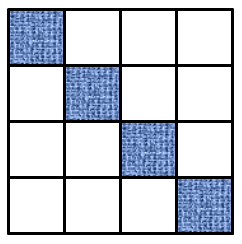}} 
	\caption{Alice's ontic states are shown along column (left to right) and Bob's are along row (down to up). Altogether there are $16$ ontic states.}
	\label{epis-bi}
\end{figure}

Measurements on the pairs of elementary systems in defined as partitioning of the
set of sixteen ontic states into disjoint epistemic states. A measurement defined in this way must be compatible with KBP. Every such compatible partitioning of the set of sixteen ontic states into four disjoint pure epistemic states yields a maximally informative measurement. One such measurement is $M\equiv\{S_I,S_{II},S_{III},S_{IV}\}$, where,
\begin{subequations}\label{toy-measurement}
	\begin{align}
	S_I&=(1.1)\vee(2.2)\vee(3.3)\vee(4.4),\quad S_{II}=(1.2)\vee(2.1)\vee(3.4)\vee(4.3),\\
	S_{III}&=(1.3)\vee(2.4)\vee(3.1)\vee(4.2),\quad S_{IV}=(1.4)\vee(2.3)\vee(3.2)\vee(4.1).
	\end{align}
\end{subequations}
This measurement is analogous to the Bell measurement in quantum mechanics and can be visualized pictorially as in Fig.\ref{spec-measurement}. 

\begin{figure}[h!]
	\begin{center} 
		\includegraphics[scale=0.5]{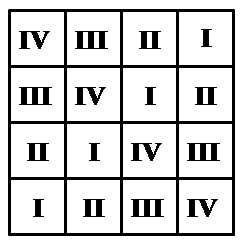}
	\end{center}
	\caption{Measurement $M$ as defined in Eq.(\ref{toy-measurement}).}
	\label{spec-measurement} 
\end{figure}  

{\bf Perfect toy-bit protocol for ALC task}\\
{\bf Encoding:} Alice and Bob start the protocol with the shared toy-bit entangled state $\psi_0$.  Consider the mapping $\{0,1\}^2\mapsto k$, with $k\in\{0,1,2,3\}$ as follows $00\mapsto 0,01\mapsto 1,10\mapsto 2,11\mapsto 3$. Whenever Alice (Bob) obtains a string $x$ ($y$) she (he) applies $U_k$ ($U_{k^\prime}$) defined in Eq.(\ref{toy-unitary}) on her (his) part of the entangled state $\psi_0$ and sends that part to Charlie. Straightforward calculation gives us,
\begin{subequations}
	\begin{align}
	U_0\circ U_0[\psi_0]=U_1\circ U_1[\psi_0]=U_2\circ U_2[\psi_0]=U_3\circ U_3[\psi_0]=\psi_0,\\
	U_0\circ U_1[\psi_0]=U_1\circ U_0[\psi_0]=U_2\circ U_3[\psi_0]=U_3\circ U_2[\psi_0]=\psi_1,\\
	U_0\circ U_2[\psi_0]=U_2\circ U_0[\psi_0]=U_1\circ U_3[\psi_0]=U_3\circ U_1[\psi_0]=\psi_2,\\
	U_0\circ U_3[\psi_0]=U_3\circ U_0[\psi_0]=U_1\circ U_2[\psi_0]=U_2\circ U_1[\psi_0]=\psi_3.
	\end{align}
\end{subequations}
The notation $U_k\circ U_{k'}[\psi]$ denotes that on Alice side permutation $U_{k}$ and on Bob side permutation $U_{k'}$ are applied.

{\bf Decoding:} For decoding, Charlie performs the measurement $M$ of Eq.(\ref{toy-measurement}). He answers $x=y$ while $S_{I}$ clicks, otherwise he answers $x\neq y$, resulting in perfect success.

\emph{Remark:} The above perfect toy-bit protocol for ALC task is different than the perfect quantum entangled strategy in a sense. While in quantum case Charlie performs a two outcome measurement for decoding and extracts only the information whether Alice's and Bob's strings are same or not, in the above case Charlie performs four-outcome measurement and hence extracts more information. However in toy-bit case also, Charlie can perform a two-outcome measurement $M'\equiv\{S_1,S_2\}$ (see Fig.\ref{spec-measurement1}) and can make the protocol exactly same as quantum perfect protocol. 
\begin{figure}[h!]
	\begin{center} 
		\includegraphics[scale=0.5]{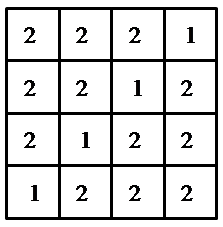}
	\end{center}
	\caption{Measurement $M'\equiv\{S_1,S_2\}$.}
	\label{spec-measurement1} 
\end{figure}  
\end{appendix}
\end{widetext}

\end{document}